\title{\Huge 
Achieving the Uniform Rate Region of General Multiple Access Channels by Polar Coding\\[0.50ex]}
\author{\large%
Hessam Mahdavifar, 
        Mostafa El-Khamy, 
        Jungwon Lee,
        Inyup Kang\\[2.25ex]
\vspace{-3.75ex}
\thanks{%
   Hessam Mahdavifar, Mostafa El-Khamy, Junwon Lee and Inyup Kang are with Modem R\&D, Samsung Electronics, San Diego, CA 92121, U.S.A.
   (e-mail: \{h.mahdavifar,\,mostafa.e,\,jungwon2.lee,\,inyup.kang\}@samsung.com).
}}
\newtheorem{theorem}{{Theorem}}
\newtheorem{lemma}[theorem]{{Lemma}}
\newtheorem{corollary}[theorem]{{Corollary}}
\newcommand{\cA}{{\cal A}} 
\newcommand{\cB}{{\cal B}}
\newcommand{\cD}{{\cal D}}
\newcommand{\cG}{{\cal G}}
\newcommand{\cJ}{{\cal J}}
\newcommand{\cM}{{\cal M}}
\newcommand{\cP}{{\cal P}}
\DeclareMathAlphabet{\mathbfsl}{OT1}{ppl}{b}{it} 
\newcommand{\bQ}{\mathbfsl{Q}} 
\newcommand{\bR}{\mathbfsl{R}}
\newcommand{\bX}{\mathbfsl{X}}
\newcommand{\be}[1]{\begin{equation}\label{#1}}
\newcommand{\ee}{\end{equation}} 
\newcommand{\eq}[1]{(\ref{#1})}
\renewcommand{\leq}{\leqslant}
\renewcommand{\geq}{\geqslant}
\newcommand{\script}[1]{{\mathscr #1}}
\renewcommand{\Bbb}{\mathbb}
\newcommand{\N}{{\Bbb N}}
\newcommand{\R}{{\Bbb R}}
\newcommand{\Tref}[1]{Theo\-rem\,\ref{#1}}
\newcommand{\Lref}[1]{Lem\-ma\,\ref{#1}}
\newcommand{\Cref}[1]{Co\-ro\-lla\-ry\,\ref{#1}}
\newcommand{\deff}{\mbox{$\stackrel{\rm def}{=}$}}
\newcommand{\Gn}{G^{\otimes n}}
\newcommand{\sX}{\script{X}}
\newcommand{\sY}{\script{Y}}
\newcommand{\sI}{\script{I}}
\newcommand{\shalf}{\mbox{\raisebox{.8mm}{\footnotesize $\scriptstyle 1$}
\footnotesize$\!\!\! / \!\!\!$ \raisebox{-.8mm}{\footnotesize
$\scriptstyle 2$}}}
\newcommand{\AQ}{\cA_{\bQ}^{(\pi)}}
\newcommand{\rp}{\bR^{(\pi)}}
\begin{document}

\maketitle

\begin{abstract}

We consider the problem of polar coding for transmission over $m$-user multiple access channels. In the proposed scheme, all users encode their messages using a polar encoder, while a joint successive cancellation decoder is deployed at the receiver. The encoding is done separately across the users and is independent of the target achievable rate, in the sense that the encoder core is the regular Ar{\i}kan's polarization matrix. For the code construction, the positions of information bits and frozen bits for each of the users are decided jointly. This is done by treating the whole polar transformation across all the $m$ users as a single polar transformation with a certain \emph{base code}. We prove that the \emph{covering radius} of the dominant face of the uniform rate region is upper bounded by $r = \frac{(m-1)\sqrt{m}}{L}$, where $L$ represents the length of the base code. We then prove that the proposed polar coding scheme achieves the whole uniform rate region, with small enough resolution characterized by $r$, by changing the decoding order in the joint successive cancellation decoder. The encoding and decoding complexities are $O(N \log N)$, where $N$ is the code block length, and the asymptotic block error probability of $O(2^{-N^{0.5 - \epsilon}})$ is guaranteed. Examples of achievable rates for the case of $3$-user multiple access channel are provided. 

\end{abstract}

\begin{keywords} 
polar code,
multiple access channel,
uniform rate region
\end{keywords}

\section{Introduction} 
\label{sec:Introduction}

\noindent 
\PARstart{P}{ol}ar codes were introduced by Ar{\i}kan in the seminal work of \cite{Arikan}. They are the first family of codes for the class of binary-input symmetric discrete memoryless channels that are provable to be capacity-achieving with low encoding and decoding complexity. Construction of polar codes is based on a phenomenon called the \emph{channel polarization}. Ar{\i}kan proves that as the block length goes to infinity the channels seen by individual bits through a certain transformation called the \emph{polar transformation} start polarizing which means that they approach either a noise-less channel or a pure-noise channel. In general, the constructed polar codes, together with the low complex \emph{successive cancellation decoder}, achieve the \emph{symmetric capacity} of all binary-input memoryless channels, where the symmetric capacity of a binary-input channel is the mutual information between the input and output of the channel assuming that the input distribution is uniform. 

Polar codes and polarization phenomenon have been successfully applied to various problems such as wiretap channels \cite{MV}, data compression \cite{Arikan2,Ab} and multiple access channels \cite{STY,AT2}. The notion of channel polarization has been extended to two user multiple access channels (MAC) \cite{STY} and later to $m$-user MAC \cite{AT2}, wherein a technique is described to polarize a given binary-input MAC same as in Ar{\i}kan's groundbreaking work of \cite{Arikan}. The authors convert multiple uses of this MAC into single uses of extremal MACs and characterize the set of all extremal MACs in the asymptotic sense. 

The capacity region of multiple access channels is fully characterized by Ahlswede \cite{A} and Liao \cite{L} for the case that the sources transmit independent messages. However, in this paper, we are only interested in the \emph{uniform rate region}. For a multiple access channel, the uniform rate region is the achievable region corresponding to the case that the input distributions are uniform. The single user counterpart of uniform rate region is indeed symmetric capacity. It has been shown that at least one point on the \emph{dominant face} of the uniform rate region can be achieved by the polar code constructed based on MAC polarization \cite{STY,AT2}. But the problem of achieving the entire uniform rate region for the general case of $m$-user MAC remained open.

The core idea behind the approach of this paper, which distinguishes it from the approach taken in the prior works of \cite{STY,AT2} can be explained as follows. It is well-known that the \emph{corner points} of the uniform rate region are achievable with capacity-achieving single user codes. We observe that the successive method of decoding transmitted messages of different users at corner points follows the same logic as in the successive cancellation decoder of polar codes. Therefore, if polar encoders are deployed by the users, then the receiver can implement a joint successive cancellation decoder in such a way that all the messages are decoded in a unified manner. It can be shown that any particular decoding order will result in a certain achievable set of rates. Then in order to achieve different rates, the idea is to shuffle not only the encoded messages but also the encoded bits in different messages. The decoding order will be determined a priori and will be known at both the transmitters and the receiver. We show the direct relation between the decoding order and the polarization base code which guarantees the channel polarization. Furthermore, we show how to change the decoding order to approach all the points in the uniform rate region using the single user channel polarization theorem.

In another related work, Ar{\i}kan proposed a scheme for the Slepian-Wolf source coding problem, which is the dual of the special case of two-user MAC, based on \emph{monotone} chain rule expansions and it is shown that the uniform rate region is achievable with polar coding \cite{Arikan3}. In \cite{arxiv,CISS}, we showed how the polar coding with joint successive cancellation decoding can be used to achieve the uniform rate region of two-user multiple access channels, by regarding the MAC as one level of polarization, along with methods to improve the finite length performance. A method for improving the performance of two-user MAC polar coding with list decoding has been described in \cite{S}. A straightforward generalization of these methods for the $m$-user case provides achievability of the one-dimensional edges of the dominant face region. However, the dominant face of the uniform rate region is in general a polytope in an $m-1$-dimensional plane in the $m$-dimensional space and the main question was whether one can approach all points on the dominant face of the uniform rate region by changing the decoding order, thereby establishing the achievability of the entire region.  

As opposed to the approaches taken in \cite{Arikan3,S,CISS}, we do not limit ourselves to a certain set of polarization base codes. In this paper, the set of all possible permutations are considered for the decoding order, assuming that the decoding order within the input bits of each user is preserved. Any polarization base code of length $L$ leads to an $m$-tuple rate, which is proved to be achievable with polar coding. We further prove that the set of these achievable $m$-tuple rates cover the entire dominant face assuming $m$-dimensional balls of radius $\frac{(m-1)\sqrt{m}}{L}$ centered around them. This shows that all the points on the dominant face of the uniform rate region can be approached as the code block length and the length of polarization base codes grow large. 

The rest of this paper is organized as follows. In Section\,\ref{sec:two}, we review some background on polar codes and multiple access channels. In Section\,\ref{sec:five}, the special case of $3$-user MAC is considered and the set of all achievable rates for base codes of length $2$ are derived for an example. In Section\,\ref{sec:three}, we introduce the general MAC polarization base codes and prove that their corresponding achievable rates cover the entire dominant face of the uniform rate region. In Section\,\ref{sec:four}, we discuss how general MAC polar transformations can be built upon polarization base codes and establish the channel polarization theory. We also discuss the decoding method and provide comparison with prior work on MAC polar coding. At the end, we conclude the paper in Section\,\ref{sec:six}.

\section{Preliminaries}
\label{sec:two}

\subsection{Polar codes}

In this subsection, we provide a brief overview of the groundbreaking work of
Ar{\i}kan~\cite{Arikan} and others~\cite{AT,Korada,KSU} on polar 
codes and channel polarization. 

Polar codes are constructed based upon a phenomenon called \emph{channel polarization} discovered by Ar{\i}kan \cite{Arikan}. 
The basic polarization matrix is given as
\be{G-def}
G 
\ = \
\left[ 
\begin{array}{cc}
1 & 0\\
1 & 1\\ 
\end{array}
\right]
\ee

Consider two independent copies of a binary-input discrete memoryless channel (B-DMC) $W: \{0,1\} \to \sY$. The two input bits $(u_1,u_2)$, drawn from independent uniform distributions, are multiplied by $G$ and then transmitted over the two copies of $W$. One level of channel polarization is the mapping $(W,W) \rightarrow (W^{-}, W^{+})$, where 
$W^-: \{0,1\} \to \sY^2$, and  $W^+: \{0,1\} \to \{0,1\} \times \sY^2$ with the following channel transformation 
\begin{equation}
\begin{split}
\label{channel-com}
W \boxcoasterisk W(y_1,y_2|u_1) &= \frac{1}{2}  \sum_{u_2 \in \{0,1\}} W(y_1 | u_1 \oplus u_2) W(y_2|u_2), \\
W \circledast W(y_1,y_2,u_1|u_2) &=  \frac{1}{2} W(y_1 | u_1 \oplus u_2) W(y_2|u_2).
\end{split}
\end{equation}
$W \boxcoasterisk W$ and $W \circledast W$ are also denoted by $W^+$ and $W^-$. The \emph{bit-channels} $W^-$ and $W^+$ are indeed the channels that $u_1$ and $u_2$ observe assuming the following scenario: the first bit $u_1$ is decoded assuming $u_2$ is noise, then $u_2$ is decoded assuming that $u_1$ is decoded successfully and is known.

The channel polarization is continued recursively by further splitting $W^{-}$ and $W^{+}$ to get $W^{--}$, $W^{-+}$, $W^{+-}$, $W^{++}$ etc. This process can be explained best by means of Kronecker powers of $G$. The Kronecker powers of $G$ are defined by induction. Let $G^{\otimes 1} = G$ and for any $n > 1$:
$$
G^{\otimes (n)}
\ = \
\left[ 
\begin{array}{cc}
G^{\otimes (n-1)} & 0\\
G^{\otimes (n-1)} & G^{\otimes (n-1)}\\ 
\end{array}
\right]
$$
It can be observed that $G^{\otimes (n)}$ is a $2^n \times 2^n$ matrix. Let $N = 2^n$. Then $\Gn$ is the $N \times N$ polarization matrix. Let $(U_1, U_2,\dots,U_N)$, denoted by $U_1^N$, be a block of $N$ independent and uniform binary random variables. The polarization matrix $\Gn$ is applied to $U_1^N$ to get $X_1^N = U_1^N \Gn$. Then $X_i$'s are transmitted through $N$ independent copies of a binary-input discrete memoryless channel (B-DMC) $W$. The output is denoted by $Y_1^N$. This transformation with input $U_1^N$ and output $Y_1^N$ is called the polar transformation. In this transformation, $N$ independent uses of $W$ is transformed into $N$ bit-channels, described next. Following the convention, random variables are denoted by capital letters and their instances are denoted by small letters. Let $W^N: \sX^N \rightarrow \sY^N$ denote the channel consisting of $N$ independent copies of $W$ i.e. 
\be{Wn}
W^N\kern-0.5pt(y^N_1|x^N_1) 
\,\ \deff\,\
\prod_{i=1}^N W(y_i|x_i)
\vspace{-0.25ex}
\ee
The combined channel $\widetilde{W}$ is defined with transition probabilities given by
\be{Wtilde}
\widetilde{W}(y^N_1|u^N_1) 
\,\ \deff\,\
W^N\kern-1pt\bigl(y^N_1\hspace{1pt}{\bigm|}\hspace{1pt}u^N_1\hspace{1pt} \Gn\bigr)
\ee
For $i=1,2,\dots,N$, the bit-channel $W_N^{(i)}$ is defined as follows: 
\be{Wi-def}
W^{(i)}_N\bigl( y^N_1,u^{i-1}_1 | \hspace{1pt}u_i)
\,\ \deff \,\
\frac{1}{2^{n-1}}\hspace{-5pt}
\sum_{u_{i+1}^N \in \{0,1\}^{n-i}} \hspace{-12pt}
\widetilde{W}\Bigl(y^N_1\hspace{1pt}{\bigm|}\hspace{1pt}
u_1^N \Bigr)
\ee
Intuitively, this is the channel that bit $u_i$ observes through a \emph{successive cancellation decoder}, deployed at the output. Under this decoding method, proposed by Ar{\i}kan for polar codes \cite{Arikan}, all the bits $u_1^{i-1}$ are already decoded and are assumed to be available at the time that $u_i$ is being decoded. The channel polarization theorem  
states that as $N$ goes to infinity, the bit-channels start polarizing meaning that they either become a noise-less channel or a pure-noise channel. 

In order to measure how good a binary-input channel $W$ is, Ar{\i}kan uses the \emph{Bhattacharyya parameter} of $W$, denoted by $Z(W)$ \cite{Arikan}, defined as
$$ 
Z(W)
\,\ \deff\kern1pt
\sum_{y\in\sY} \!\sqrt{W(y|0)W(y|1)}
$$

It is easy to show that the Bhattacharyya parameter $Z(W)$ is always between $0$ and $1$. Channels with $Z(W)$ close to zero are almost noiseless, while channels with $Z(W)$ close to one are almost pure-noise channels. More precisely, it can be proved that the probability of error of a binary symmetric memoryless channel (BSM) is upperbounded by its Bhattacharyya parameter. Let $[N]$ denotes the set of positive integers less than or equal to $N$. The set of \emph{good bit-channels} $\cG_N(W,\beta)$ is defined for any $\beta < \shalf$ ~\cite{AT,Korada}:
\be{Arikan-good-def}
\cG_N(W,\beta) {\deff} \left\{\, i \in [N] ~:~ Z(W^{(i)}_{N}) < 2^{-N^{\beta}}\!\!/N \hspace{1pt}\right\}
\ee
Then the channel polarization theorem is proved by showing that the fraction of good bit-channels approaches the symmetric capacity $I(W)$, as $N$ goes to infinity \cite{AT}. This theorem readily leads to a construction of capacity-achieving \emph{polar codes}. The idea is to transmit the information bits over the good bit-channels while freezing the input to the other bit-channels to a priori known values, say zeros. The decoder for this constructed code is the successive cancellation decoder of Ar{\i}kan~\cite{Arikan}, where it is further proved that the frame error probability under successive cancellation decoding is upper bounded by the sum of Bhattacharyya parameters of the selected good bit-channels, which is $2^{-N^{\beta}}$ by the particular choice of good bit-channels in \eq{Arikan-good-def}. 

\subsection{Multiple access channel}

Let $W: \sX^m \rightarrow \sY$ be an $m$-user MAC, where $\sX$ is the binary alphabet and $\sY$ is the output alphabet. With slight abuse of notation, the channel is described by the transition probability $W(y|x[1],x[2],\dots,x[m])$ for any $x[j] \in \sX$, $j \in [m]$, and $y \in \sY$. Namely, $W(y|x[1],x[2],\dots,x[m])$ denote the probability of receiving $y$ given that $x[j]$ is transmitted by the $j$-th user user, for $j \in [m]$. For any $\cJ \subset [m]$, let $x[\cJ]$ denote the set $\left\{x[j]: j \in \cJ\right\}$. 

Let $X[1],X[2],\dots,X[m]$ be independent and uniform binary random variables generated by users $1,2,\dots,m$. Then the uniform rate region of $W$, denoted by $\sI(W)$, is defined to be the set of all $m$-tuples $\bR = (R_1,R_2,\dots,R_m) \in \R^m$ such that
\be{region-def}
0 \leq \sum_{j \in \cJ} R_j \leq I(X[\cJ];Y,X[\cJ^c]), \forall \cJ \subseteq [m].  
\ee

The uniform rate region is the set of all achievable $m$-tuples of rates assuming that the input distributions are uniform. Let $I(W)$, which is also called the \emph{uniform sum-rate} of $W$, be defined as follows:
$$
I(W) = I(X[1],X[2],\dots,X[m];Y)
$$ 
In general, in the context of this paper, any point $\bR = (R_1,R_2,\dots,R_m) \in \R^m$ is regarded as an $m$-tuple of rates and hence the $\sum R_j$,  is referred to as the sum-rate of $\bR$. 

The dominant face of the uniform rate region, denoted by $\cD(W)$, is defined to be the set of points in $\sI(W)$, with the maximum sum-rate $I(W)$ i.e. the right inequality of \eq{region-def} is in fact equality for $\cJ = [m]$. It can be observed that, the achievability of the uniform rate region $\sI(W)$ is equivalent to the achievability of its dominant face $\cD(W)$. Therefore, our focus throughout this paper is on the achievability of $\cD(W)$. 

  

\section{The proposed scheme for $3$-user Multiple Access Channels}
\label{sec:five}

In this section, we consider the special case of polar coding for $3$-user multiple access channels. Let $W$ be a $3$-user MAC with transition probability $W(y|x[1],x[2],x[3])$. The uniform rate region of $W$ is in general a $3$-dimensional polyhedron, which is shown for an example, described later, in Figure\,\ref{region2}, where $R_1$, $R_2$ and $R_3$ are rates of user $1$, $2$ and $3$, respectively. 

Let $X[1]$, $X[2]$ and $X[3]$ be uniform and independent binary random variables. Let $I_1$, $I_2$ and $I_3$ be defined as follows:
\begin{align*}
I_1 &= I(X[1];Y)\\
I_2 &= I(X[2];Y,X[1])\\
I_3 &= I(X[3];Y,X[1,2])
\end{align*}
The dominant face $\cD(W)$ is the hexagon whose vertices are specified in Figure\,\ref{face}. These vertices are also called \emph{corner points} of the uniform rate region. For simplicity, it is assumed that $W$ is symmetric with respect to the inputs i.e. $W(y|x[1],x[2],x[3])$ remains the same if $(x[1],x[2],x[3])$ is permuted.

\begin{figure}[h]
\centering
\includegraphics[width=\linewidth]{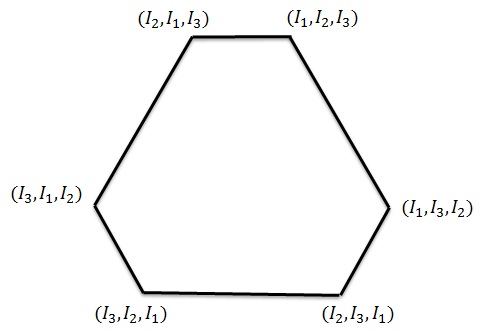}
\caption{The dominant face $\cD(W)$}
\label{face}
\end{figure} 

The corner points are achievable by \emph{separate} polar coding. For instance, in order to achieve the point $(I_1,I_2,I_3)$, the message of user $1$ is decoded first. Then the message of user $2$ is decoded assuming the user $1$'s message is known. At the end, the message of user $3$ is decoded, assuming the first and second messages are known. In this case, the scheme does not depend on the underlying polar codes and any other capacity achieving code will fit as well. We observed that the successive method of decoding the messages follows the same rule as in the successive cancellation decoding of polar codes. Therefore, we propose the joint successive cancellation decoding for all the users. The idea is to shuffle the decoding order of not only the messages, but also the bits within each message. In order to guarantee channel polarization, we let the decoding order for a certain polarization \emph{base code} to be arbitrarily chosen. Then the recursive steps of channel polarization are applied on top of the base code.

For the separate polar coding, the length of the base code is $1$, which is defined to be the length of the message for each user. There are $3! = 6$ different base codes which result in the achievability of $6$ corner points. The idea then is to increase the length of the base code and achieve more and more points. An example of a base code of length $2$ is shown in Figure\,\ref{building3}. The decoding order of this base code is given by $(x_1[1],x_1[2],x_1[3], x_2[1], x_2[2], x_2[3])$. This decoding order can be simply denoted by a permutation $\pi$ which permutes the default vector $(x_1[1],x_2[1],x_1[2], x_2[2], u_3[1], u_3[2])$. In this case, $\pi = (1,4,2,5,3,6)$. For polar coding with a general block length $N$ built upon this base code, the decoding order is specified as follows. The successive cancellation decoding decodes the first half of user $1$'s message first, then the first half of user $2$'s message and then the first half of user $3$'s message followed by the second half of the messages of user $1$, $2$ and $3$, respectively.

\begin{figure}[h]
\centering
\includegraphics[width=\linewidth]{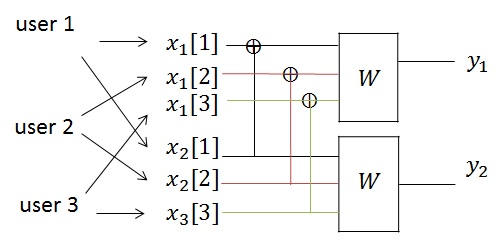}
\caption{An example for a base code of length $2$}
\label{building3}
\end{figure} 

In total, there are $\frac{6!}{2!2!2!} = 90$ possible base codes of length $2$. Notice that the base code has to keep the order within each user's message i.e. $x_1[j]$ has to appear before $x_2[j]$ in the decoding order. We have derived all these achievable points for an example. The system model assumes that the multiple access channel $W$ is a binary-additive Gaussian noise channel where inputs $x[1], x[2], x[3]  \in \left\{0,1\right\}$ are modulated using BPSK ($0$ is mapped to $-1$ and $1$ is mapped to $+1$) into $\overline{x}[1]$, $\overline{x}[2]$ and $\overline{x}[3]$, respectively. The output of the channel is denoted by $y$, where $y=\overline{x}[1]+\overline{x}[2]+\overline{x}[3]+N$ and $N$ is the Gaussian noise of unit variance $N_0$. For this channel, the capacity region is same as the uniform rate region and is given by the set of all possible $3$-tuple rates $(R_1,R_2,R_3)$ that satisfy
\begin{align*}
R_1,R_2,R_3 &\leq I_3  = 0.7215\\
R_1+R_2,R_1+R_3,R_2+R_3 &\leq I_3+I_2 = 1.1106\\
R_1+R_2+R_3  &\leq I_1+I_2+I_3 =  1.3681
\end{align*}

The uniform rate region along with all the $90$ achievable points on the dominant face with base codes of length $2$ are shown in Figure\,\ref{region2}. We prove in the next section that by letting the length of the base code to grow large, the achievable $3$-tuple rates of base codes cover the entire dominant face which will then be used to establish the achievability of the capacity region with polar coding.

\begin{figure}[h]
\centering
\includegraphics[width=\linewidth]{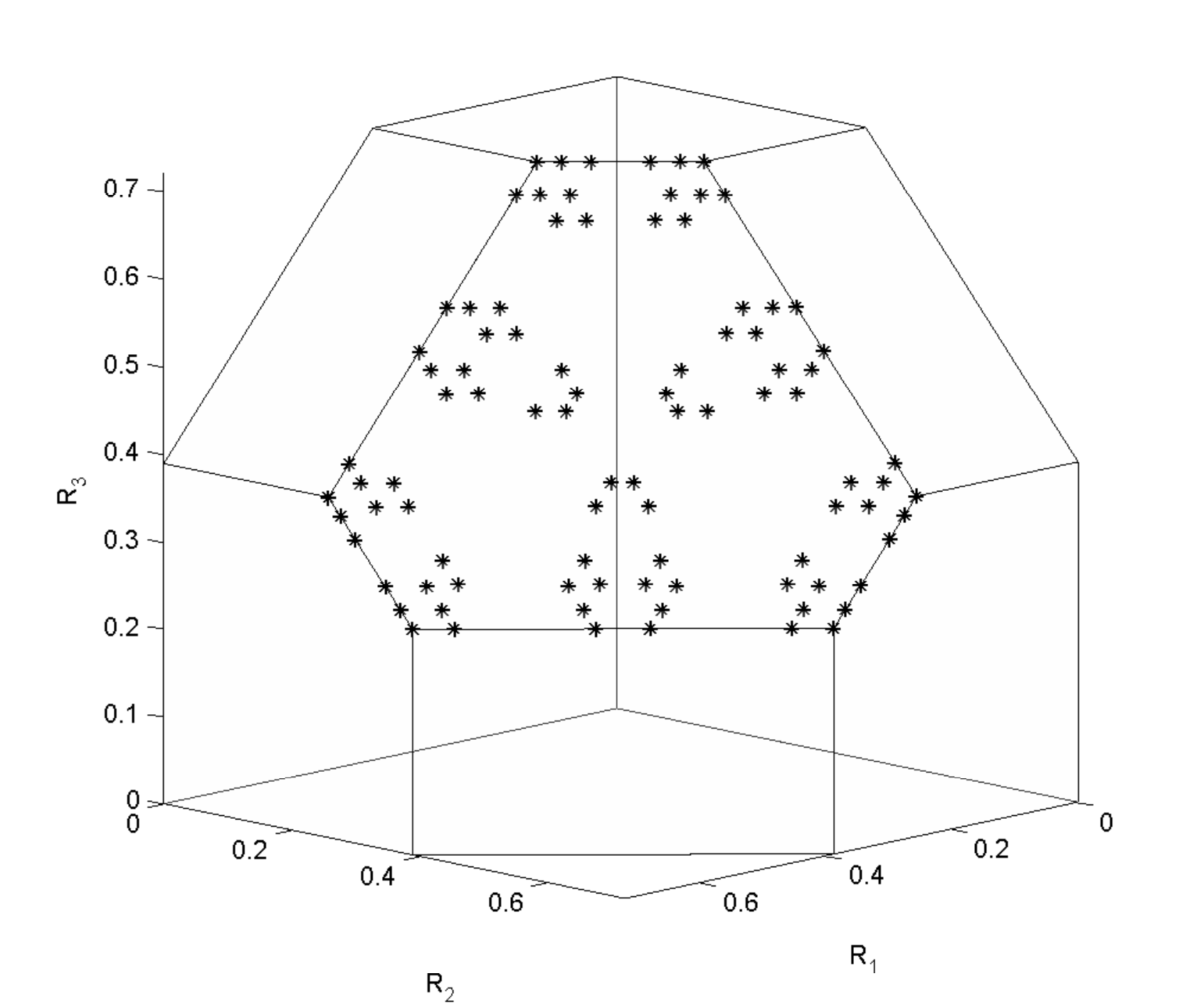}
\caption{The uniform rate region and the achievable points with base codes of length $2$}
\label{region2}
\end{figure}

\section{Covering radius of the dominant face}
\label{sec:three}

Let $W$ be a given $m$-user binary-input discrete multiple access channel. Let also $l \geq 1$ be a positive integer and $L = 2^l$. For $j = 1,2,\dots,m$, assume that $U_1^L[j]=(U_1[j],U_2[j],\dots,U_L[j])$ is a vector of independent and uniformly distributed bits that is generated by the $j$-th user, independent of the other users. We also refer to $U_1^L[j]$ as the input bits of the $j$-th user. Let also $X_1^L[j] = U_1^L[j]\,G^{\otimes l}$. For $i=1,2,\dots,L$, the $m$-tuple $(X_i[1],X_i[2],\dots,X_i[m])$ is transmitted through the $i$-th independent copy of $W$ and the output is denoted by $Y_i$. There are several ways to form an ordered sequence from the concatenated vector $\bigl(U_1^L[1],U_1^L[2],\dots,U_1^L[m]\bigr)$. For notational convenience, the concatenated vector $\bigl(U_1^L[1],U_1^L[2],\dots,U_1^L[m]\bigr)$ is also denoted by $D_1^{mL}$. In general, there are 
$$
{mL \choose L,L,...,L} = \frac{(mL)!}{L!^m}
$$
permutations $\pi$ assuming that the order of $U_1^L[j]$ is preserved through the permutation i.e. $U_i[j]$ appears in the permuted sequence before $U_k[j]$, for any $j$ and $i < k$. Let $\cP_L$ denote the set of all such permutations. The described MAC polar transformation along with the choice of permutation $\pi \in \cP_L$ represents a polarization base code of length $L$, on top of which the recursive channel polarization is applied. This procedure will be elaborated more in the next section. The permutation $\pi$ will enforce the decoding order in the joint successive cancellation decoder and hence, is also referred to as the decoding order.

The mutual information $I(D_1^{mL};Y_1^L)$ can be expanded with respect to the permutation $\pi$ using the chain rule as follows:
$$
I(D_1^{mL};Y_1^L) = \sum_{i=1}^{mL} I\bigl(D_{\pi^{-1}(i)};Y_1^L,D_{\pi^{-1}(1)}^{\pi^{-1}(i-1)}\bigr)
$$

Then for $i=1,2,\dots,mL$, let $I_i^{(\pi)}$ denote the $i$-th term in the above expansion i.e.
\be{Iji-dfn}
I_i^{(\pi)}\,\deff\, I\bigl(D_{\pi^{-1}(i)};Y_1^L,D_{\pi^{-1}(1)}^{\pi^{-1}(i-1)}\bigr)
\ee
Also, for $j = 1,2,\dots,m$, let
\be{Rj-dfn}
R_j^{(\pi)}\ \deff\ \frac{1}{L} \sum_{i=(j-1)L+1}^{jL} I_{\pi(i)}^{(\pi)}
\ee
In fact, intuitively speaking, $R_j^{(\pi)}$ is the allocated capacity to the $j$-th user. As we will see in the next section, the $m$-tuple of rates $\bR^{(\pi)} = (R_1^{(\pi)},R_2^{(\pi)},\dots,R_m^{(\pi)})$ is achievable with polar coding built upon the base code with permutation $\pi$. But before that, through the rest of this section, we characterize the \emph{covering radius} of the dominant face. The covering radius $r$, with respect to the length of the base codes $L$, is formally defined as follows:
\be{covering-radius}
r \,\deff\, \max_{\bQ \in \cD(W)} \min_{\pi \in \cP_L} \left\| \bQ - \bR^{(\pi)}\right\|
\ee
where $\left\|\bX\right\|$ is the Euclidean norm of $X$ in the $m$-dimensional space $\R^m$.

The following lemma is the result of \eq{Iji-dfn} and \eq{Rj-dfn}.
\begin{lemma}
\label{sum-rate}
For any permutation $\pi$, the sum-rate of $\bR^{(\pi)}$ is $I(W)$.
\end{lemma}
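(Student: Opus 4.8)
The plan is to observe that the per-user rates $R_j^{(\pi)}$ merely repartition, and do not alter, the total mutual information, so that the permutation $\pi$ is irrelevant to the sum-rate. First I would sum the definition \eq{Rj-dfn} over all users. Since the index blocks $\{(j-1)L+1,\dots,jL\}$ for $j=1,\dots,m$ partition $\{1,\dots,mL\}$, this gives
\begin{equation*}
\sum_{j=1}^m R_j^{(\pi)} \,=\, \frac{1}{L}\sum_{i=1}^{mL} I_{\pi(i)}^{(\pi)}.
\end{equation*}
Because $\pi$ is a permutation of $\{1,\dots,mL\}$, the map $i\mapsto\pi(i)$ is a bijection, so the substitution $k=\pi(i)$ turns the right-hand side into $\frac1L\sum_{k=1}^{mL} I_k^{(\pi)}$. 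By the chain-rule expansion written just before \eq{Iji-dfn}, this last sum is exactly $I(D_1^{mL};Y_1^L)$, which does not depend on $\pi$. Hence $\sum_j R_j^{(\pi)} = \frac1L\,I(D_1^{mL};Y_1^L)$, and it remains only to show $I(D_1^{mL};Y_1^L) = L\,I(W)$.

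Next I would pass from the input bits to the channel inputs. Since $G^{\otimes l}$ is invertible over $\Ftwo$, each map $U_1^L[j]\mapsto X_1^L[j]$ is a bijection, so $D_1^{mL}=\bigl(U_1^L[1],\dots,U_1^L[m]\bigr)$ and $\bigl(X_1^L[1],\dots,X_1^L[m]\bigr)$ determine one another; mutual information being invariant under bijections of an argument, $I(D_1^{mL};Y_1^L)=I\bigl(X_1^L[1],\dots,X_1^L[m];Y_1^L\bigr)$. Moreover, each $U_1^L[j]$ is uniform on $\{0,1\}^L$ and $G^{\otimes l}$ is a bijection, so each $X_1^L[j]$ is uniform, and a uniform vector has i.i.d.\ uniform coordinates; combined with independence across users this makes the whole collection $\{X_i[j]\}_{i\in[L],\,j\in[m]}$ mutually independent and uniform.

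Finally, since the $L$ channel uses are independent and the input blocks $(X_i[1],\dots,X_i[m])$ are independent across $i$, the mutual information of the product channel splits additively:
\begin{equation*}
I\bigl(X_1^L[1],\dots,X_1^L[m];Y_1^L\bigr) \,=\, \sum_{i=1}^L I\bigl(X_i[1],\dots,X_i[m];Y_i\bigr) \,=\, L\,I(W),
\end{equation*}
each summand equalling $I(W)$ because the inputs there are uniform and independent and $Y_i$ is a single use of $W$. Combining with the first step yields $\sum_j R_j^{(\pi)} = I(W)$.

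The argument is essentially bookkeeping, and I do not expect a genuine obstacle; no single-user polarization theorem is invoked here. The only point deserving care is the additive splitting in the last display, which requires both the memorylessness of the product channel (so the outputs are conditionally independent given the inputs) and the true independence of the input blocks across channel uses — the latter being exactly what the uniformity-plus-invertibility observation of the second step provides. The lemma thus records that changing the decoding order $\pi$ only reallocates the fixed total rate $I(W)$ among the $m$ users, which is precisely why the achievable points $\rp$ all lie on the dominant face $\cD(W)$.
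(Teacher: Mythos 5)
Your proof is correct and matches the paper's intent: the paper states \Lref{sum-rate} without a written proof, asserting it is an immediate consequence of \eq{Iji-dfn} and \eq{Rj-dfn}, and your argument is exactly the bookkeeping (re-indexing by the bijection $\pi$, collapsing the chain rule to $I(D_1^{mL};Y_1^L)$, and using invertibility of $G^{\otimes l}$, uniformity/independence of the inputs, and memorylessness to get $L\,I(W)$) that justifies that assertion. No gap; your care about the additive splitting requiring independent inputs across channel uses is the right detail to flag.
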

\noindent
{\bf Remark.} \looseness=-1 
In fact a more general statement than \Lref{sum-rate} holds. For any permutation $\pi$, $\bR^{(\pi)}$ is a point on the dominant face $\cD(W)$. This will also follow as a result of the next section, where we prove that these points are achievable by polar coding.

For two users $j_1$ and $j_2$, we write $j_1 \rightarrow j_2$ if an input bit of the user $j_1$ appears right before an input bit of the user $j_2$ through the permutation $\pi$. More precisely, there exist $i_1$ and $i_2$ with $(j_1-1)L+1 \leq i_1 \leq j_1 L$ and $(j_2-1)L+1 \leq i_2 \leq j_2 L$ such that $\pi(i_2) = \pi(i_1)+1$. In that case, we also define the new permutation $\pi'$ from $\pi$ by swapping $\pi(i_1)$ and $\pi(i_2)$ i.e. $\pi'(i_2) = \pi(i_1)$, $\pi'(i_1) = \pi(i_2)$ and $\pi'(i) = \pi(i)$ for $i \neq i_1,i_2$. We say that $\pi'$ is the transposition of $\pi$ with respect to $j_1 \rightarrow j_2$. Notice that the transposition with respect to $j_1 \rightarrow j_2$ is not necessarily unique, as there may be other input bits of the user $j_1$ that appear right before other input bits of the user $j_2$. In that case, we may choose one of them for the transposition.

\begin{lemma}
\label{resolution1}
Let $\pi'$ be the transposition of $\pi$ with respect to $j_1 \rightarrow j_2$. Then we have
$$
R_j^{(\pi)} = R_j^{(\pi')}\ \text{for}\ j \neq j_1,j_2 
$$
and
$$
0 \leq R_{j_1}^{(\pi')} - R_{j_1}^{(\pi)} = R_{j_2}^{(\pi)} - R_{j_2}^{(\pi')} \leq \frac{1}{L}
$$
\end{lemma}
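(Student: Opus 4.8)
The plan is to show that replacing $\pi$ by its transposition $\pi'$ alters exactly two of the $mL$ terms in the chain‑rule expansion, and to express the net effect as a single conditional mutual information that is manifestly symmetric and bounded. First I would fix notation: write $s = \pi(i_1)$, so that $\pi(i_2) = s+1$, and let $A = D_{\pi^{-1}(1)}^{\pi^{-1}(s-1)}$ denote the bits decoded strictly before step $s$ under $\pi$. Because $\pi'$ differs from $\pi$ only by swapping the decoding positions of $D_{i_1}$ and $D_{i_2}$ at the consecutive steps $s$ and $s+1$, the bit decoded at every step $t \notin \{s,s+1\}$ is the same under both permutations, and the collection of bits decoded before step $t$ is also identical: it contains both $D_{i_1}$ and $D_{i_2}$ when $t > s+1$ regardless of their internal order, and neither of them when $t < s$ (there is no step strictly between $s$ and $s+1$). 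Hence $I_t^{(\pi)} = I_t^{(\pi')}$ for all $t \neq s,s+1$, and $A$ is common to both permutations.

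Next I would dispatch the users $j \neq j_1,j_2$. Every bit of such a user has a natural index $i$ with $\pi(i) = \pi'(i) \notin \{s,s+1\}$, so each term contributing to $R_j^{(\pi)}$ in \eq{Rj-dfn} equals the corresponding term for $R_j^{(\pi')}$, giving $R_j^{(\pi)} = R_j^{(\pi')}$ immediately.

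The heart of the argument concerns $j_1$ and $j_2$. The only term of $R_{j_1}^{(\pi)}$ that changes is the one indexed by $i_1$, whose decoding position moves from $s$ to $s+1$ under $\pi'$; writing the two affected terms explicitly gives
\[
R_{j_1}^{(\pi')} - R_{j_1}^{(\pi)} = \frac{1}{L}\bigl[\,I(D_{i_1};Y_1^L,A,D_{i_2}) - I(D_{i_1};Y_1^L,A)\,\bigr],
\]
and symmetrically, since $D_{i_2}$ moves from step $s+1$ to step $s$,
\[
R_{j_2}^{(\pi)} - R_{j_2}^{(\pi')} = \frac{1}{L}\bigl[\,I(D_{i_2};Y_1^L,A,D_{i_1}) - I(D_{i_2};Y_1^L,A)\,\bigr].
\]
Applying the chain rule for mutual information to each bracket rewrites the first as $\frac{1}{L}\,I(D_{i_1};D_{i_2}\mid Y_1^L,A)$ and the second as $\frac{1}{L}\,I(D_{i_2};D_{i_1}\mid Y_1^L,A)$. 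By the symmetry of conditional mutual information these two quantities coincide, which establishes the claimed equality $R_{j_1}^{(\pi')} - R_{j_1}^{(\pi)} = R_{j_2}^{(\pi)} - R_{j_2}^{(\pi')}$. Finally, this common value is nonnegative because conditional mutual information is nonnegative, and it is at most $\frac{1}{L}$ because $I(D_{i_1};D_{i_2}\mid Y_1^L,A) \leq H(D_{i_1}) \leq 1$, as $D_{i_1}$ is a single binary random variable.

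I expect the only delicate point to be the bookkeeping in the first paragraph: one must verify precisely that the conditioning sets at steps other than $s$ and $s+1$ are unchanged and that $A$ is shared by both permutations, since this is exactly where the consecutive‑position hypothesis $\pi(i_2)=\pi(i_1)+1$ is used. Once that invariance is in place, the equality and the two bounds follow from the chain rule, the symmetry of conditional mutual information, and the binary entropy bound.
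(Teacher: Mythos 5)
Your proof is correct and follows essentially the same route as the paper's: you isolate the single term of $R_{j_1}^{(\pi)}$ and of $R_{j_2}^{(\pi)}$ that the transposition affects, observe all other chain-rule terms (and hence all other users' rates) are unchanged, and bound the increment between $0$ and $1$. The only cosmetic difference is that you obtain the middle equality from the symmetry of the conditional mutual information $I(D_{i_1};D_{i_2}\mid Y_1^L,A)$, whereas the paper deduces it from sum-rate conservation (\Lref{sum-rate}); both are valid, and your version has the small bonus of exhibiting the common increment explicitly.
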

\begin{proof}
By definition, $\pi'$ is the result of $\pi$ by swapping $\pi(i_1)$ and $\pi(i_2)$, where $\pi(i_2) = \pi(i_1) +1$. Therefore, by definition of $I_i^{(\pi)}$ in \eq{Iji-dfn}, $I_i^{(\pi)} = I_i^{\pi'}$ for $i \neq i_1,i_2$. This implies that the values of $R_j^{\pi}$ do not change during the specified transposition, for $j \neq j_1,j_2$. 

For the second part, 
\begin{equation}
\begin{split}
\label{resolution-eq1}
L(R_{j_1}^{(\pi')} - R_{j_1}^{(\pi)}) &= \frac{\sum_{i=(j_1-1)L+1}^{j_1 L} I_i^{(\pi')}}{L} - \frac{\sum_{i=(j_1-1)L+1}^{jL} I_i^{(\pi)}}{L} \\
& = \frac{I_{i_1}^{(\pi')} - I_{i_1}^{(\pi)}}{L}
\end{split}
\end{equation}
Also,
\be{resolution-eq2}
0 \leq I_{i_1}^{(\pi)} \leq I_{i_1}^{(\pi')} \leq 1
\ee
\eq{resolution-eq1} and \eq{resolution-eq2} together imply that
$$
0 \leq R_{j_1}^{(\pi')} - R_{j_1}^{(\pi)} \leq \frac{1}{L}
$$
The other inequality also follows similarly or simply by observing that $\sum R_j^{(\pi)} = \sum R_j^{(\pi')} = I(W)$, by \Lref{sum-rate}.
\end{proof}

Fix an arbitrary point $\bQ = (Q_1,Q_2,\dots,Q_m)$ on the dominant face $\cD(W)$. For a given permutation $\pi$, let $\AQ \subset [m]$ be defined as follows:
\be{AQ-def}
\AQ \,\deff\, \left\{j\in [m]: R^{(\pi)}_j < Q_j\right\}
\ee
\begin{lemma}
\label{resolution2}
For any $\cB \subset \AQ$, at least one input bit from the complement set $\cB^c$ appears after some input bits of the set $\cB$ through the permutation $\pi$. 
\end{lemma}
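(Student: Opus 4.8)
The plan is to argue by contradiction. Suppose the conclusion fails for some nonempty $\cB \subset \AQ$. By definition this means that no input bit of a user in $\cB^c$ appears after any input bit of a user in $\cB$ under $\pi$; equivalently, in the decoding order every input bit of $\cB^c$ precedes every input bit of $\cB$, so the $|\cB|L$ input bits of the users in $\cB$ occupy the last block of decoding positions. My goal is to show that this ordering forces $\sum_{j \in \cB} R_j^{(\pi)}$ to equal the MAC cut bound $I(X[\cB]; Y, X[\cB^c])$, which will clash with the strict rate deficits guaranteed by $\cB \subset \AQ$.

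First I would rewrite the partial sum of rates. By \eq{Rj-dfn}, $\sum_{j \in \cB} R_j^{(\pi)} = \frac{1}{L} \sum_{p} I_p^{(\pi)}$, where $p$ ranges over the decoding positions occupied by the input bits of $\cB$. Since the bits of $D_1^{mL}$ are mutually independent, each term in \eq{Iji-dfn} simplifies to $I_p^{(\pi)} = I(D_{\pi^{-1}(p)}; Y_1^L \mid D_{\pi^{-1}(1)}^{\pi^{-1}(p-1)})$. Because all bits of $\cB^c$ are decoded first, for each such $p$ the conditioning block $D_{\pi^{-1}(1)}^{\pi^{-1}(p-1)}$ consists of all of $D_{\cB^c}$ together with precisely the $\cB$-bits decoded before $p$; the chain rule then telescopes the partial sum into $I(D_\cB; Y_1^L \mid D_{\cB^c})$, where $D_\cB$ and $D_{\cB^c}$ collect the input bits of the users in $\cB$ and $\cB^c$, respectively. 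Finally, invertibility of $G^{\otimes l}$ makes $D_\cB, D_{\cB^c}$ bijective with $X_1^L[\cB], X_1^L[\cB^c]$, so the mutual information is unchanged, and memorylessness across the $L$ uses of $W$ with uniform i.i.d.\ inputs gives $I(D_\cB; Y_1^L \mid D_{\cB^c}) = L\, I(X[\cB]; Y \mid X[\cB^c])$. Hence, using input independence to replace conditioning by joint observation, $\sum_{j \in \cB} R_j^{(\pi)} = I(X[\cB]; Y \mid X[\cB^c]) = I(X[\cB]; Y, X[\cB^c])$.

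To close the argument, note that $\bQ \in \cD(W) \subseteq \sI(W)$, so inequality \eq{region-def} with $\cJ = \cB$ gives $\sum_{j \in \cB} Q_j \le I(X[\cB]; Y, X[\cB^c]) = \sum_{j \in \cB} R_j^{(\pi)}$. On the other hand, $\cB \subseteq \AQ$ with $\cB \neq \emptyset$ forces $R_j^{(\pi)} < Q_j$ for every $j \in \cB$, hence $\sum_{j \in \cB} R_j^{(\pi)} < \sum_{j \in \cB} Q_j$, a contradiction. I expect the single delicate point to be the telescoping in the second paragraph: one must check that the ``$\cB^c$-before-$\cB$'' ordering makes the conditioning block of every $\cB$-term contain the full $D_{\cB^c}$ plus exactly the previously decoded $\cB$-bits, so that the chain rule collapses the partial sum cleanly to $I(D_\cB; Y_1^L \mid D_{\cB^c})$; the subsequent single-letterization via $G^{\otimes l}$ and memorylessness is then routine.
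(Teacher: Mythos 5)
Your proof is correct and takes essentially the same route as the paper's: assume for contradiction that every input bit of $\cB^c$ precedes every input bit of $\cB$, use the chain rule to collapse $\sum_{j\in\cB}R_j^{(\pi)}$ to the cut bound $I(X[\cB];Y,X[\cB^c])$, and contradict the strict deficits $R_j^{(\pi)}<Q_j$ via \eq{region-def}. You merely spell out the telescoping and the single-letterization (invertibility of $G^{\otimes l}$, memorylessness, the $1/L$ normalization) that the paper leaves implicit.
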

\begin{proof}
Assume, to the contrary, that all input bits of the users in the set $\cB^c$ appear before all the input bits of the users in the set $\cB$. This together with the chain rule of mutual information and definition of $R_j^{(\pi)}$ in \eq{Rj-dfn} imply that
\be{resolution2-1}
\sum_{j \in \cB} R_j^{(\pi)} = I(U_1^L[\cB];Y,U_1^L[\cB^c]). 
\ee 
Notice that $Q$ is a point included in $\sI(W)$. Therefore, using \eq{region-def} and \eq{resolution2-1} we get
$$
\sum_{j \in \cB} Q_j \leq I(U_1^L[\cB];Y_1^L, U_1^L[\cB^c]) = \sum_{j \in \cB} R_j^{(\pi)}
$$
But since $\cB \subset \AQ$, for any $j \in \cB$, $Q_j > R_j^{(\pi)}$ which is a contradiction. This proves the lemma. 
\end{proof}

For two users $j$ and $j'$, we say that $j'$ is reachable from $j$, if there exists a sequence of users $j_1,j_2,\dots,j_t$, with $j_i \in [m]$, such that $j \rightarrow j_0 \rightarrow j_1 \rightarrow \dots \rightarrow j_t \rightarrow j'$. This sequence is also referred to as the path from $j$ to $j'$. If such path exists, then one can assume, without loss of generality, that $j_1,j_2,\dots,j_t$ are distinct. Clearly, if $j''$ is reachable from $j'$ and $j'$ is reachable from $j$, then $j''$ is reachable from $j$. 

\begin{corollary}
\label{c-path}
For any $j \in \AQ$, at least one element of the complement set $[m] - \AQ$ is reachable from $j$. 
\end{corollary}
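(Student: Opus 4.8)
The plan is to argue by contradiction, converting the graph-theoretic notion of reachability into a statement about the decoding order and then invoking \Lref{resolution2}. Fix $j \in \AQ$ and let $\cB$ denote the set consisting of $j$ together with all users reachable from $j$ through $\pi$. Suppose, toward a contradiction, that no element of $[m]-\AQ$ is reachable from $j$. Then every user reachable from $j$ lies in $\AQ$, and since $j \in \AQ$ itself, we obtain $\cB \subseteq \AQ$.

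Next I would exploit that $\cB$ is closed under the relation $\rightarrow$: if $j' \in \cB$ and $j' \rightarrow j''$, then $j''$ is reachable from $j$ by concatenating the path witnessing $j' \in \cB$ with the edge $j' \rightarrow j''$, so $j'' \in \cB$. Recalling that $j'\rightarrow j''$ means an input bit of $j'$ is decoded \emph{immediately} before an input bit of $j''$, this closure says that no input bit of a user in $\cB$ is decoded immediately before an input bit of a user in $\cB^c$. Reading the $mL$ bits in the order imposed by $\pi$ and letting $s_0$ be the first decoding step at which a $\cB$-bit appears, a short induction on the decoding step then shows that every step from $s_0$ onward decodes a $\cB$-bit: once a $\cB$-bit is decoded, the following bit cannot belong to $\cB^c$, hence belongs to $\cB$. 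Consequently the $\cB$-bits occupy a suffix of the decoding order, i.e. all input bits of users in $\cB^c$ are decoded strictly before all input bits of users in $\cB$.

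Finally I would invoke \Lref{resolution2}. Before doing so I note that $\cB^c \neq \emptyset$: indeed $\AQ \neq [m]$, since $\AQ = [m]$ would give $\sum_{j} Q_j > \sum_j R_j^{(\pi)} = I(W)$ by \Lref{sum-rate}, contradicting that $\bQ$ lies on the dominant face $\cD(W)$ where $\sum_j Q_j = I(W)$; hence $\cB \subseteq \AQ \subsetneq [m]$. Now \Lref{resolution2} applied to the nonempty set $\cB \subseteq \AQ$ asserts that at least one input bit of $\cB^c$ appears \emph{after} some input bit of $\cB$, which directly contradicts the conclusion of the previous paragraph that every $\cB^c$-bit precedes every $\cB$-bit. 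This contradiction proves that some element of $[m]-\AQ$ must be reachable from $j$.

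I expect the only delicate point to be the middle step: making precise that closure of $\cB$ under $\rightarrow$ — a purely local statement about adjacent bits in the decoding order — upgrades to the global statement that the $\cB$-bits form a suffix. This is where the induction on decoding steps is essential, and where one must use that $\rightarrow$ is defined through \emph{immediate} succession of bits (not merely of users), so that closure genuinely forbids any $\cB \rightarrow \cB^c$ adjacency. Once that suffix structure is established, the application of \Lref{resolution2} is immediate, modulo the easy observation that $\cB$ is a proper nonempty subset so that $\cB^c$ is nonempty.
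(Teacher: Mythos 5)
Your proof is correct and follows essentially the same route as the paper's: take $\cB$ to be the set of users reachable from $j$ (together with $j$), observe it is closed under $\rightarrow$ so that its input bits occupy a suffix of the decoding order, and then contradict \Lref{resolution2}. The extra care you take — the induction establishing the suffix structure from the purely local adjacency relation, and the check that $\cB^c$ is nonempty via \Lref{sum-rate} — fills in details the paper leaves implicit, but does not change the argument.
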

\begin{proof}
Let $\cB$ be the set of all reachable users from $j$. Then $\cB$ is a closed set in the sense that no element in $\cB^c$ is reachable from any element of $\cB$. It implies that through the permutation $\pi$, all the input bits of $\cB$ appear after all the input bits of $\cB^c$. Then by \Lref{resolution2}, $\cB$ is not a subset of $\AQ$ which proves the corollary.   
\end{proof}

Now that we have established the necessary notations and derived desired properties of the $m$-tuple rates of the base codes, we turn to state the main theorem of this section as follows: 
\begin{theorem}
\label{cover}
Given a point $\bQ \in \cD(W)$, there exists a permutation $\pi \in \cP_L$ such that for any index $j \in [m]$, we have
$$
|Q_j - R^{(\pi)}_j| \leq \frac{m-1}{L}
$$
\end{theorem}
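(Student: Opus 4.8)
The plan is to select the permutation that is genuinely closest to $\bQ$ in Euclidean distance and then read off the coordinate bounds from its local optimality. Concretely, I would let $\pi^*$ be a minimizer of the squared objective $\Phi(\pi) = \|\bR^{(\pi)} - \bQ\|^2 = \sum_{j=1}^{m} (R_j^{(\pi)} - Q_j)^2$ over the finite set $\cP_L$, and write $e_j \deff R_j^{(\pi^*)} - Q_j$, so that $\sum_j e_j = 0$ by \Lref{sum-rate}. The assertion of the theorem is exactly $|e_j| \le (m-1)/L$ for all $j$, which I would obtain by bounding $\min_j e_j$ from below and $\max_j e_j$ from above.

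The key local step is a per-edge inequality. For any adjacency $a \rightarrow b$ in the decoding order of $\pi^*$, let $\pi'$ be the corresponding transposition; by \Lref{resolution1} we have $R_a^{(\pi')} = R_a^{(\pi^*)} + \delta$ and $R_b^{(\pi')} = R_b^{(\pi^*)} - \delta$ with $0 \le \delta \le 1/L$ and all other coordinates unchanged. Expanding gives $\Phi(\pi') - \Phi(\pi^*) = 2\delta(e_a - e_b + \delta) \ge 0$, so for $\delta > 0$ dividing by $2\delta$ yields $e_b - e_a \le \delta \le 1/L$. Thus along every forward adjacency the discrepancy $e$ can grow by at most $1/L$. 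Summing this inequality along any reachability path $j \rightarrow j_1 \rightarrow \dots \rightarrow j'$ with distinct users, of which there are at most $m-1$ edges, produces the chain bound $e_{j'} \le e_j + (m-1)/L$.

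With the chain bound in hand, the two coordinate bounds follow from reachability-to-the-complement. For the lower bound, take the index $j^-$ achieving $\min_j e_j$; if $e_{j^-} < 0$ then $j^- \in \cA^{(\pi^*)}_{\bQ}$, so by \Cref{c-path} some $j' \in [m] - \cA^{(\pi^*)}_{\bQ}$ (hence $e_{j'} \ge 0$) is reachable from $j^-$, and the chain bound gives $e_{j^-} \ge e_{j'} - (m-1)/L \ge -(m-1)/L$. For the upper bound I would use the mirror statement: for the index $j^+$ achieving $\max_j e_j$, if $e_{j^+} > 0$ then some user $j''$ with $e_{j''} \le 0$ can reach $j^+$, whence $e_{j^+} \le e_{j''} + (m-1)/L \le (m-1)/L$. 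This mirror reachability is proved exactly as \Lref{resolution2} and \Cref{c-path} with the roles of ``before'' and ``after'' exchanged: one takes the set $\cC$ of users from which $j^+$ is reachable (closed under predecessors, so every bit of $\cC$ precedes every bit of $\cC^c$) and, assuming $\cC$ contains only surplus users, derives a contradiction from the dominant-face chain-rule identity $I(W) = I(X[\cC];Y) + I(X[\cC^c];Y,X[\cC])$, which forces $\sum_{j\in\cC} R_j^{(\pi^*)} = I(X[\cC];Y) \le \sum_{j\in\cC} Q_j$. Combining the two bounds gives $-(m-1)/L \le e_j \le (m-1)/L$, i.e. $|Q_j - R^{(\pi^*)}_j| \le (m-1)/L$.

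The step I expect to be the main obstacle is the degenerate case $\delta = 0$ on an adjacency, where the optimality inequality $2\delta(e_a - e_b + \delta) \ge 0$ becomes vacuous and fails to control $e_b - e_a$, breaking the chain bound if a path must pass through such an edge. Since a $\delta = 0$ transposition leaves all rates, and hence $\Phi$, unchanged, I would handle it either by a genericity/limiting argument, perturbing $W$ so that every transposition is non-degenerate, proving the bound there, and passing to the limit using continuity of $R_j^{(\pi)}$ in $W$ together with the closedness of the claimed inequality, or by observing that such null swaps can reroute the reachability paths through non-degenerate edges without altering $\bR^{(\pi^*)}$. The only other point requiring care is the precise formulation of the mirror of \Cref{c-path}, but this is a direct transcription of the existing argument.
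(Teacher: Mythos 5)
Your argument is essentially the paper's own proof of this theorem: the same choice of $\pi$ as the Euclidean-nearest point of $\{\bR^{(\pi)}\}$ to $\bQ$, the same transposition inequality extracted from local optimality (your $e_b - e_a \le \delta$ is the paper's $a_{i-1} - a_i \le \alpha$ up to the sign convention $e_j = -a_j$), the same telescoping along a reachability path supplied by \Cref{c-path}, and your explicitly worked-out mirror argument for the surplus users (all bits of the predecessor-closed set $\cC$ preceding those of $\cC^c$, then the chain-rule/dominant-face contradiction) is precisely what the paper compresses into ``by symmetry.'' The $\delta = 0$ degeneracy you flag is a genuine subtlety, but it is one the paper leaves equally unaddressed --- its passage from $a_{i-1}^2 + a_i^2 \le (a_{i-1}-\alpha)^2 + (a_i+\alpha)^2$ to $a_{i-1} - a_i \le \alpha$ silently divides by $\alpha$ --- so on this point you are being more careful than the published argument, though neither write-up fully closes it; of your two proposed fixes, the rerouting one (a null swap produces another minimizer with the identical rate vector, so one may pass to it and look for a fresh path) is the more viable, since perturbing $W$ also moves $\cD(W)$ and hence $\bQ$.
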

\begin{proof}
Let $\pi$ to be the permutation such that the Euclidean distance $\left\|\bR^{(\pi)}-\bQ\right\|$ is minimum among all the permutations in $\cP_L$ i.e. for any other permutation $\pi'$:
$$
\left\|\bR^{(\pi)}-\bQ\right\| \leq \left\|\bR^{(\pi')}-\bQ\right\|.
$$
Consider $\AQ$, defined as in \eq{AQ-def}. If $\AQ$ is empty, then it implies that for any $j \in [m]$, $Q_j \leq R^{(\pi)}_j$. But both $\bR^{(\pi)}$ and $\bQ$ have the same sum-rate i.e.
$$
\sum_{j \in [m]} R^{(\pi)}_j = \sum_{j \in [m]} Q_j = I(W).
$$
Therefore, $\bQ = \bR^{(\pi)}$ and the lemma is proved. In the case that $\AQ$ is non-empty, let $j_1$ be an arbitrary element of $\AQ$. By \Cref{c-path}, there is a path $j_1 \rightarrow j_2 \rightarrow \dots \rightarrow j_t$ with $j_i \in \AQ$, for $i = 2,\dots,t-1$, and $j_t \notin \AQ$. Without loss of generality, we can assume that $j_i$'s are distinct and hence $t \leq m$. For notational convenience, let's re-label the users so that the user $j_i$ is labeled with $i$ for $i \in [t]$. 

Define the sequence of real numbers $\left\{a_i\right\}_{i=1}^{t}$, where $a_i = Q_i - R^{(\pi)}_{i}$. We claim that $a_{i-1} - a_i \leq \frac{1}{L}$. The claim will imply that $a_1 - a_t \leq \frac{t-1}{L}$ and since $a_t$ is not a positive number, as $t \notin \AQ$, we will have 
$$
a_1 = Q_1 - R^{(\pi)}_{1} \leq a_1 - a_t \leq \frac{t-1}{L} \leq \frac{m-1}{L}
$$

By symmetry, the same set of arguments can be applied to the set $[m] - \AQ$, where one can prove that $R^{(\pi)}_{j} - Q_j \leq \frac{m-1}{L}$, for $j \in [m] - \AQ$. This will complete the proof of theorem. What remains is to prove the claim for the sequence $\left\{a_i\right\}_{i=1}^{t}$.  

For $i = 2,3,\dots,t$, let the permutation $\pi_i$ be the transposition of $\pi$ with respect to $i-1 \rightarrow i$. 
By \Lref{resolution1}, we have
\begin{align}
\label{resolution3-1}
0 \leq R_{{i-1}}^{(\pi_i)} - R_{i-1}^{(\pi)} &= R_{i}^{(\pi)} - R_{i}^{(\pi_i)} \leq \frac{1}{L}\\
\label{resolution3-2}
R_{j}^{(\pi)} &= R_{j}^{(\pi_i)}\ \text{for}\ {j \neq i-1,i}
\end{align}
On the other hand, by the choice of $\pi$, we know that
\be{resolution3-3}
\left\|\bR^{(\pi)} - \bQ\right\| \leq \left\|\bR^{(\pi_i)} - \bQ\right\| 
\ee
\eq{resolution3-2} and \eq{resolution3-3} together imply that
\be{resolution3-4}
\begin{split}
(Q_{i-1} -  R_{i-1}^{(\pi)})^2 &+ (Q_i -  R_{i}^{(\pi)})^2 \\
&\leq (Q_{i-1} -  R_{i-1}^{(\pi_i)})^2 + (Q_i -  R_{i}^{(\pi_i)})^2
\end{split}
\ee
Let $\alpha = R_{{i-1}}^{(\pi_i)} - R_{i-1}^{(\pi)}$, then by \eq{resolution3-1}, \eq{resolution3-4} can be re-written as
$$
a_{i-1}^2 + a_i^2 \leq (a_{i-1} - \alpha)^2 + (a_{i}+\alpha)^2
$$
which can be simplified as
$$
a_{i-1} - a_i \leq \alpha
$$
By \eq{resolution3-1}, $\alpha \leq \frac{1}{L}$ which completes the proof of claim.
\end{proof}

\begin{corollary}
\label{cover2}
The covering radius of the dominant face $\cD(W)$ is upper bounded by $\frac{(m-1)\sqrt{m}}{L}$.
\end{corollary}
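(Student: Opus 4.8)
The plan is to obtain the corollary directly from Theorem~\ref{cover}, which carries all the real weight; what remains is only to convert the coordinate-wise ($\ell^\infty$-type) bound proved there into a bound on the Euclidean distance appearing in the definition \eqref{covering-radius} of the covering radius. First I would fix an arbitrary point $\bQ = (Q_1,\dots,Q_m) \in \cD(W)$. By Theorem~\ref{cover}, there is a permutation $\pi \in \cP_L$ for which the bound $|Q_j - R^{(\pi)}_j| \le \frac{m-1}{L}$ holds simultaneously for every coordinate $j \in [m]$.

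Next I would estimate the Euclidean norm of the difference vector using these $m$ bounds. Since each squared coordinate is at most $\bigl(\frac{m-1}{L}\bigr)^2$, summing over the $m$ coordinates gives
$$
\left\|\bQ - \bR^{(\pi)}\right\| = \left(\sum_{j=1}^m \bigl(Q_j - R^{(\pi)}_j\bigr)^2\right)^{1/2} \le \left(m\,\frac{(m-1)^2}{L^2}\right)^{1/2} = \frac{(m-1)\sqrt{m}}{L}.
$$
Because this single permutation $\pi$ already realizes the bound, the minimum over all permutations obeys $\min_{\pi' \in \cP_L}\|\bQ - \bR^{(\pi')}\| \le \frac{(m-1)\sqrt{m}}{L}$.

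Finally, since $\bQ$ was an arbitrary point of the dominant face, this inequality holds for every $\bQ \in \cD(W)$; taking the maximum over $\bQ$ in \eqref{covering-radius} then yields $r \le \frac{(m-1)\sqrt{m}}{L}$, which is exactly the claimed bound. I do not expect any genuine obstacle here: the content is entirely in Theorem~\ref{cover}, and the remaining step is the elementary passage from a uniform per-coordinate bound to an $\ell^2$ bound, which is responsible precisely for the factor $\sqrt{m}$ in the statement.
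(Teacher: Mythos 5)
Your proposal is correct and follows essentially the same route as the paper: invoke Theorem~\ref{cover} to get the uniform coordinate-wise bound $|Q_j - R^{(\pi)}_j| \le \frac{m-1}{L}$, sum the squares over the $m$ coordinates to bound $\|\bQ - \bR^{(\pi)}\|^2$ by $\frac{m(m-1)^2}{L^2}$, and conclude via the definition of the covering radius in \eqref{covering-radius}. No gaps.
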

\begin{proof}
Consider an arbitrary point $\bQ$ on the dominant face. Then the permutation $\pi$ exists as in \Tref{cover}. Then we will have
$$
\left\|\bQ - \bR^{(\pi)}\right\|^2 = \sum_{j = 1}^{m} (Q_j - R_j^{(\pi)})^2 \leq \frac{m(m-1)^2}{L^2}.
$$ 
which proves the corollary by definition of covering radius in \eq{covering-radius}.
\end{proof}

\noindent
{\bf Remark.} \looseness=-1 
In Theorem 5 of \cite{CISS}, we showed for a two-user MAC that for any point $\bQ$ on the dominant face there exists a pair of achievable rate with a distance at most $\frac{\sqrt{2}}{L}$ from $\bQ$. In other words, the covering radius for the special case of $m=2$ is upper bounded by $\frac{\sqrt{2}}{L}$. This result matches with the result of \Cref{cover2} for $m = 2$.

\section{Achieving the uniform rate region with joint decoding}
\label{sec:four}

In this section, we show that the points $\rp$ defined in \eq{Rj-dfn} are achievable by polar coding and a joint successive cancellation decoder, for all $L=2^l$ and permutations $\pi \in \cP_L$. 

Let $W$ be a given $m$-user binary-input discrete multiple access channel. Let also $n \geq l$ be a positive integer and $N = 2^n$. For $j = 1,2,\dots,m$, assume that $U_1^N[j]$ is a vector of independent and uniformly distributed bits that is generated by the $j$-th user, independent of other users. Let also $X_1^N[j] = U_1^N[j]\,G^{\otimes n}$. For $i=1,2,\dots,N$, the $m$-tuple $(X_i[1],X_i[2],\dots,X_i[m])$ is transmitted through the $i$-th independent copy of $W$ and the output is denoted by $Y_i$. This transformation together with an ordered sequence of the input bits $(U_1^N[1],U_1^N[2],\dots,U_1^N[m])$ is called a MAC polar transformation. This ordered sequence is also referred to as the decoding order, because it will specify the order in which input bits are decoded in the successive cancellation decoder. For a fixed value of $L$, we limit our attention to polarization transforms of length $N$ whose decoding orders are built upon elements of $\cP_L$, specified next.

For any permutation $\pi : [M] \rightarrow [M]$ and any $k \in \N$, where $k,M \in \N$, we define the permutation $\pi^{(k)} : [kM] \rightarrow [kM]$, as follows. First the sequence $(1,2,\dots,kM)$ is split into $k$ sub-sequences $s_1,s_2,\dots,s_M$ of length $k$ each, where $s_i = \bigl(k(i-1)+1,k(i-1)+2,\dots,ki\bigr)$. Then the permutation $\pi$ is applied to the sequence $(s_1,s_2,\dots,s_M)$ to get the sequence $s_{\pi(1)},s_{\pi(2)},\dots,s_{\pi(M)}$. At the end, the sub-sequences are expanded back to get a sequence of total length $kM$. For instance, if $M = 2$ and $\pi(1) = 2, \pi(2) =1$, then for any $k$, the permutation $\pi^{(2k)}$ replaces the first and second sub-block of length $k$ with each other. 

With the above definition, decoding orders on $(U_1^N[1],U_1^N[2],\dots,U_1^N[m])$ specified by $\pi^{(N/L)}$, for some $\pi \in \cP_L$, are considered. In that sense, the MAC polar transformations of length $L$ and decoding orders $\pi$ are the corresponding base codes. As we will see, the polar transformations of length $N$ built upon polarization base codes of length $L$ can be decomposed into $mL$ certain single-user polar transformations of length $N/L$. 

Let $W^N: \sX^{mN} \rightarrow \sY^N$ denote the channel consisting of $N$ independent copies of $W$ i.e. 
\be{Wnm}
W^N\kern-0.5pt(y^N_1|x_1^N[\cM]) 
\,\ \deff\,\
\prod_{i=1}^N W(y_i|x_i[\cM])
\vspace{-0.25ex}
\ee
where $\cM = [m]$ for notational convenience. The combined channel $\widetilde{W}_N$ is defined with transition probabilities given by
\be{Wtildem}
\widetilde{W}_N(y^N_1|u_1^N[\cM]) 
\,\ \deff\,\ W^N\kern-0.5pt(y^N_1|x_1^N[\cM]) 
\ee
where $x_1^N[j] = u_1^N[j].\Gn$, for $j \in \cM$. Let $d_1^{mN}$ denote the ordered version of the sequence $(u_1^N[1],u_1^N[2],\dots,u_1^N[m])$ according to a certain $\pi^{(N/L)}$.
The bit-channels are defined with respect to this ordered sequence, which will also enforce the decoding order in the joint successive decoding. For $j=1,2,\dots,mN$, the $j$-th bit-channel is defined as
\be{Wi-def-general}
W^{(j)}_{N}\bigl( y^N_1,d^{j-1}_1 | \hspace{1pt}d_j)\,\ \deff \,\
\frac{1}{2^{mN-1}}\hspace{-5pt}
\sum_{d_{j+1}^{mN} \in \{0,1\}^{mN-j}} \hspace{-12pt}
\widetilde{W}_N \Bigl(y^N_1\hspace{1pt}{\bigm|}\hspace{1pt}
d_1^{mN} \Bigr)
\ee
The permutation $\pi$ and consequently the decoding order are assumed to be fixed for the rest of this section. Actually the results are not specific to $\pi$ and will hold for any $\pi \in \cP_L$. 

The following lemma is similar to the Proposition 3 of \cite{Arikan} that still holds for the particular MAC polar transformation defined here and its corresponding bit-channels.
\begin{lemma}
\label{channel-split-general}
Given $N = 2^n \geq L$ independent copies of $W$ and the bit-channels defined in \eq{Wi-def-general}, for any $j$ with $1 \leq j \leq N$,
$$
W_{2N}^{(2j-1)} = W_N^{(j)} \boxcoasterisk W_N^{(j)}
$$
and
$$
W_{2N}^{(2j)} = W_N^{(j)} \circledast W_N^{(j)}
$$
\end{lemma}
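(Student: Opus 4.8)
The plan is to mirror Ar{\i}kan's proof of Proposition~3 in \cite{Arikan}, performing a single level of combining on top of the length-$N$ transform and isolating the one genuinely new ingredient: the interaction of the block-doubling permutation $\pi^{(2N/L)}$ with that combining step. First I would record how the encoding recurses. Each user encodes separately with the same matrix, and $G^{\otimes(n+1)}$ is obtained from $\Gn$ by one Ar{\i}kan butterfly; splitting the $2N$ channel outputs into the first $N$ and the last $N$, and using that the users are mutually independent, the definitions \eq{Wnm} and \eq{Wtildem} factor as
\[
\widetilde{W}_{2N}\bigl(y_1^{2N}\mid u_1^{2N}[\cM]\bigr) = \widetilde{W}_N\bigl(y_1^N\mid v_1^N[\cM]\bigr)\,\widetilde{W}_N\bigl(y_{N+1}^{2N}\mid w_1^N[\cM]\bigr),
\]
where, for every user $k$ and as in \cite{Arikan}, the length-$N$ inputs $v_1^N[k]$ and $w_1^N[k]$ are obtained from $u_1^{2N}[k]$ by the coordinatewise butterfly that pairs consecutive positions and sends each pair to its sum and its second coordinate. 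Thus the length-$2N$ combined channel is literally two independent copies of the length-$N$ combined channel, coupled only through this butterfly, which acts \emph{within} each user because the encoders are separate.

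The second and main step is to check that the decoding order is compatible with this factorization: I must show that positions $2j-1$ and $2j$ of the ordered sequence $d_1^{2mN}$ defined by $\pi^{(2N/L)}$ are exactly the two members of one butterfly pair, lying in the same user, and that collapsing every such pair to the single index $j$ recovers the ordered sequence defined by $\pi^{(N/L)}$. This is precisely what the block-doubling definition of $\pi^{(k)}$ delivers. Passing from $N$ to $2N$ keeps the same block permutation $\pi\in\cP_L$ but doubles the common block length from $N/L$ to $2N/L$; since $2N/L$ is even and every block belongs to a single user, each consecutive pair $(2j-1,2j)$ falls inside one block and hence inside one user, and within that block its two entries are the odd/even partners of the single-user recursion. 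Because the block index carrying position $j$ under $\pi^{(N/L)}$ is the same block index carrying positions $2j-1,2j$ under $\pi^{(2N/L)}$, the odd sub-sequence collapses position $2j-1$ onto position $j$, identically in both copies. I expect this index bookkeeping to be the principal obstacle, as it is the only place where the separate-encoding and permutation structure of the base code enters, and where one must confirm that the correct pairing is $(2j-1,2j)$ rather than, say, $(j,N+j)$.

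With the factorization and this compatibility in hand, the conclusion is the standard marginalization. Substituting the product form of $\widetilde{W}_{2N}$ into the definition \eq{Wi-def-general} of $W_{2N}^{(2j-1)}$ and splitting the sum over the future bits $d_{2j}^{2mN}$ into the butterfly partner $d_{2j}$ together with the futures of the two copies, the inner sum over each copy's future reproduces a factor $W_N^{(j)}$ evaluated at the butterfly-transformed past and current bit; what remains is the sum over the partner $d_{2j}$, which is exactly the combining operation of \eq{channel-com}, so $W_{2N}^{(2j-1)} = W_N^{(j)} \boxcoasterisk W_N^{(j)}$. For the even index the partner $d_{2j-1}$ now lies in the conditioning as a past bit, so no marginalization over it occurs and the two factors appear as a bare product, which is the splitting operation of \eq{channel-com} and gives $W_{2N}^{(2j)} = W_N^{(j)} \circledast W_N^{(j)}$. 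Finally the prefactor $1/2^{2mN-1}$ in \eq{Wi-def-general} at length $2N$ reconciles with the two factors $1/2^{mN-1}$ from the copies and the single $1/2$ in \eq{channel-com}, which completes the proof.
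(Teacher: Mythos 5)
Your proposal is correct and follows essentially the same route as the paper: factor the length-$2N$ combined channel into two independent copies of the length-$N$ combined channel coupled by one Ar{\i}kan butterfly, and then marginalize as in Proposition~3 of \cite{Arikan} to read off the $\boxcoasterisk$ and $\circledast$ operations. The only difference is that you spell out the compatibility of $\pi^{(2N/L)}$ with the odd/even pairing (which is indeed the one genuinely new point here), whereas the paper absorbs it into the phrase ``the recursive structure of polar transformation'' and only writes the algebra for the even-index case.
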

\begin{proof}
Let $d^{2N}_{1,o}$ and $d^{2N}_{1,e}$ denote the even-indexed and odd-indexed subvectors, respectively. Then
\begin{align*}
W_{2N}^{(2i)}  = & \frac{1}{2^{2mN-1}} \sum_{d_{2i+1}^{2mN}} \widetilde{W}_N\bigl(y^N_{1,o} | d^{2N}_{1,o} \oplus d^{2N}_{1,e} \bigr) 
\widetilde{W}_N\bigl(y^N_{1,e} | d^{2N}_{1,e} \bigr) \\
= & \frac{1}{2} \frac{1}{2^{mN-1}} \sum_{d_{2i+1,e}^{2N}} \widetilde{W}_N\bigl(y^N_{1,e} | d^{2N}_{1,e} \bigr). \\
& \frac{1}{2^{mN-1}} \sum_{d_{2i+1,o}^N} \widetilde{W}_N\bigl(y^N_{1,o} | d^{2N}_{1,o} \oplus d^{2N}_{1,e} \bigr) \\
= & W_N^{(i)} \circledast W_N^{(i)}
\end{align*} 
where we used the definitions of bit-channels in \eq{Wi-def-general} and the channel combining operation in \eq{channel-com} together with the recursive structure of polar transformation. The other equation can be derived using the similar arguments.
\end{proof}

\Lref{channel-split-general} implies that the polar transformation of length $N$ and decoding order $\pi^{(N/L)}$ can be regarded as $n - l$ levels of polarization of the bit-channels defined with polarization base code of length $L$ and decoding order $\pi$. This is the result of following corollary:

\begin{corollary}
\label{recursion-split-general}
For a given $k \in \left[mL\right]$, let $\overline{W}$ denote the single user channel $W_{L}^{(k)}$. Let also $n \geq l$ and $N=2^n$. Then for any $\frac{(k-1)N}{L}+1 \leq j \leq \frac{kN}{L}$, 
$$
W_{N}^{(j)} = \overline{W}_{\frac{N}{L}}^{(j-\frac{(k-1)N}{L})}
$$
where the bit-channels with respect to $\overline{W}$ are defined as in \eq{Wi-def}. 
\end{corollary}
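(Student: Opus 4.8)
The plan is to prove this by induction on the number of additional polarization levels $n - l$, taking \Lref{channel-split-general} as the recursion that drives the induction. The base case is $n = l$, where $N = L$ and $N/L = 1$: the index range $\frac{(k-1)N}{L}+1 \leq j \leq \frac{kN}{L}$ collapses to the single value $j = k$, and by the definition \eq{Wi-def} the length-one transform $\overline{W}_1^{(1)}$ is just the channel $\overline{W} = W_L^{(k)}$ itself, so the identity $W_L^{(k)} = \overline{W}_1^{(1)}$ holds trivially.

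For the inductive step, I would assume the claim at block length $N$, namely $W_N^{(j)} = \overline{W}_{N/L}^{(j - (k-1)N/L)}$ for every $j$ in the $k$-th block, and establish it at block length $2N$. Fixing $k$ and setting $i = j - \frac{(k-1)N}{L}$, so that $i$ runs over $1,\dots,N/L$, I would apply \Lref{channel-split-general} on the left together with the corresponding single-user recursion for the bit-channels of $\overline{W}$ (Proposition~3 of \cite{Arikan}, in the same convention \eq{channel-com}) on the right, giving
\begin{align*}
W_{2N}^{(2j-1)} &= W_N^{(j)} \boxcoasterisk W_N^{(j)} = \overline{W}_{N/L}^{(i)} \boxcoasterisk \overline{W}_{N/L}^{(i)} = \overline{W}_{2N/L}^{(2i-1)}, \\
W_{2N}^{(2j)} &= W_N^{(j)} \circledast W_N^{(j)} = \overline{W}_{N/L}^{(i)} \circledast \overline{W}_{N/L}^{(i)} = \overline{W}_{2N/L}^{(2i)}.
\end{align*}
Since $2i-1 = (2j-1) - \frac{(k-1)(2N)}{L}$ and $2i = 2j - \frac{(k-1)(2N)}{L}$, these are exactly the desired identities at block length $2N$, which closes the induction.

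The one point that needs care is the index bookkeeping. I would verify that, as $j$ sweeps the $N/L$ indices of the $k$-th block at length $N$, the paired indices $2j-1$ and $2j$ sweep precisely the $2N/L$ indices $\frac{(k-1)(2N)}{L}+1,\dots,\frac{k(2N)}{L}$ of the $k$-th block at length $2N$, so that the MAC recursion and the single-user recursion can be matched block by block, and that the shift $i = j - (k-1)N/L$ transforms consistently into the shifted superscripts $2i-1,2i$. I expect this to be the only genuine obstacle: everything concerning the decoding-order permutation $\pi$ is already encapsulated in \Lref{channel-split-general}, which is stated for the MAC bit-channels \eq{Wi-def-general} built on $\pi^{(N/L)}$ and $\pi^{(2N/L)}$, so the corollary's proof does not need to re-examine the permutation structure, and the recursion applied to the fixed single-user channel $\overline{W}$ is the ordinary Ar{\i}kan one.
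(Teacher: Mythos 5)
Your proof is correct and follows essentially the same route as the paper's: induction on the number of polarization levels beyond the base code, with the base case $N=L$ trivial and the inductive step obtained by matching the MAC recursion of \Lref{channel-split-general} against Ar{\i}kan's single-user recursion (Proposition 3 of \cite{Arikan}) for $\overline{W}$. The paper's own proof is just a terser statement of exactly this argument; your explicit index bookkeeping fills in the details it leaves implicit.
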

\begin{proof}
The proof is by induction on $n$. The base of induction is trivial for $n=1$. The induction steps are by the fact that channel combining recursion steps in Arikan's polar transformation, as proved in Proposition 3 of \cite{Arikan}, match with \Lref{channel-split-general}.   
\end{proof}

For $j \in [m]$, let $S^{(j)}_L$ denote the set of positions of the input bits of the $j$-th user in the ordered sequence specified by $\pi$. Then the set of indices $[mN]$ is split into $m$ subsets $S^{(j)}_N$, for $j \in [m]$, as follows:
$$
S^{(j)}_N = \left\{\, j \in [mN] ~:~ \left\lceil \frac{Lj}{N} \right\rceil \in S^{(j)}_L  \hspace{1pt}\right\}
$$
In fact, $S^{(j)}_N$ is the set of positions assigned to the $j$-th user in the decoding order specified by $\pi^{(N/L)}$. The set of good bit-channels for the $j$-th user is also defined as follows. For any $\beta \,{<}\, \shalf$ and $N = 2^n$
\begin{equation}
\label{good-def}
\cG_N^{(j)}(W,\beta)\ {\deff}\ \left\{\, j \in S^{(j)}_N ~:~ Z(W^{(j)}_{N}) < 2^{-N^{\beta}}\!\!/mN \hspace{1pt}\right\}
\end{equation}

The next theorem establishes the main result of channel polarization for the MAC polar transformation.
\begin{theorem}
\label{thm2}
For any $m$-user binary-input discrete MAC $W$, any constant $\beta \,{<}\, \shalf$ and $j \in [m]$, we have
$$
\lim_{N \to \infty} \frac{\bigl|\cG^{(j)}_N(W, \beta)\bigr|}{N} = R_j^{(\pi)} 
$$
\end{theorem}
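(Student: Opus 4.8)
The plan is to reduce the statement to the single-user channel polarization theorem of \cite{AT} by invoking the block decomposition of \Cref{recursion-split-general}. That corollary shows that the length-$N$ MAC polar transformation splits into $mL$ independent single-user Ar{\i}kan transformations of length $M \deff N/L$: for each $k \in [mL]$, the bit-channels $W_N^{(j)}$ with $\tfrac{(k-1)N}{L}+1 \le j \le \tfrac{kN}{L}$ are exactly the length-$M$ Ar{\i}kan bit-channels of the single-user base bit-channel $\overline{W} = W_L^{(k)}$. I will call this range of indices \emph{block} $k$. Thus the good-channel count $\bigl|\cG_N^{(j)}(W,\beta)\bigr|$ decomposes blockwise, and the whole problem reduces to counting good bit-channels inside each single-user block and summing over the blocks that belong to user $j$.

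First I would pin down two bookkeeping identities. From the definition \eq{Wi-def-general} of the base bit-channels and the chain-rule expansion defining $I_i^{(\pi)}$ in \eq{Iji-dfn}, the uniform-input symmetric capacity of $W_L^{(k)}$ is precisely the $k$-th chain-rule term, i.e.\ $I(W_L^{(k)}) = I_k^{(\pi)}$, since the bit $d_k$ sees the output $(y_1^L,d_1^{k-1})$ and $(d_k,d_1^{k-1})$ is just the relabelling of $(D_{\pi^{-1}(k)}, D_{\pi^{-1}(1)}^{\pi^{-1}(k-1)})$. Second, by the definition of $S^{(j)}_N$, every index in block $k$ lies in $S^{(j)}_N$ iff $k \in S^{(j)}_L$, so $S^{(j)}_N$ is the disjoint union of the $L$ blocks indexed by $k \in S^{(j)}_L$. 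Combining these with \eq{Rj-dfn} and the fact that $\{\pi(i): (j-1)L+1 \le i \le jL\} = S^{(j)}_L$ yields the clean target identity $R_j^{(\pi)} = \tfrac1L \sum_{k \in S^{(j)}_L} I(W_L^{(k)})$.

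Next I would apply single-user polarization to each block. For fixed $k$, \Cref{recursion-split-general} identifies the block-$k$ bit-channels with the length-$M$ Ar{\i}kan bit-channels of the B-DMC $\overline{W}=W_L^{(k)}$, and \cite{AT} gives $\lim_{M\to\infty} |\cG_M(\overline{W},\beta)|/M = I(\overline{W})$. The genuinely delicate point, which I expect to be the main obstacle, is that the MAC good set in \eq{good-def} uses the threshold $2^{-N^{\beta}}/mN$, whereas $\cG_M(\overline{W},\beta)$ in \eq{Arikan-good-def} uses $2^{-M^{\beta}}/M$ with $M=N/L$. Since $N=ML$ with $L$ fixed, one has $N^{\beta}=L^{\beta}M^{\beta}$, so the two thresholds differ only through a constant in the exponent and a polynomial factor. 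I would resolve this by a squeeze: fix $\beta<\shalf$ and pick exponents $\beta_1 < \beta < \beta_2 < \shalf$; for all large $M$ one checks $2^{-M^{\beta_2}}/M \le 2^{-N^{\beta}}/mN \le 2^{-M^{\beta_1}}/M$, because $M^{\beta_2}$ dominates $L^{\beta}M^{\beta}+\log_2(mML)$, which in turn dominates $M^{\beta_1}$. Hence the block-$k$ MAC good set is sandwiched, $\cG_M(\overline{W},\beta_2)\subseteq \{\,Z(W_N^{(j)})<2^{-N^{\beta}}/mN \text{ in block }k\,\}\subseteq \cG_M(\overline{W},\beta_1)$, and since both bounding sets have normalized size tending to $I(\overline{W})=I(W_L^{(k)})$, so does the block-$k$ MAC good set.

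Finally I would assemble the pieces. Writing $g_N^{(k)}$ for the number of good bit-channels inside block $k$, the previous step gives $g_N^{(k)}/(N/L)\to I(W_L^{(k)})$, hence $g_N^{(k)}/N \to \tfrac1L I(W_L^{(k)})$. Because $S^{(j)}_N$ is the disjoint union of the $L$ blocks with $k\in S^{(j)}_L$, we have $\bigl|\cG_N^{(j)}(W,\beta)\bigr| = \sum_{k\in S^{(j)}_L} g_N^{(k)}$, a finite sum of $L$ terms with $L$ fixed, so the limit passes through term by term:
\[
\lim_{N\to\infty}\frac{\bigl|\cG_N^{(j)}(W,\beta)\bigr|}{N}
= \sum_{k\in S^{(j)}_L}\frac1L\, I(W_L^{(k)})
= R_j^{(\pi)},
\]
where the last equality is the target identity of the second paragraph. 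This establishes \Tref{thm2}.
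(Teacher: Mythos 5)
Your proof is correct, and its skeleton is the same as the paper's: both arguments reduce the claim to the single-user polarization theorem of \cite{AT} via the block decomposition of \Cref{recursion-split-general}, and both use the identity $I(W_L^{(k)}) = I_k^{(\pi)}$ together with \eq{Rj-dfn} to recognize the resulting sum as $R_j^{(\pi)}$. The one place where you genuinely diverge is the endgame. The paper treats the threshold mismatch between \eq{good-def} and \eq{Arikan-good-def} one-sidedly: it picks a single $\beta' < \shalf$ with $N^{\beta} < (N/L)^{\beta'}$, which yields only the inequality $\lim_{N\to\infty} \bigl|\cG_N^{(j)}(W,\beta)\bigr|/N \geq R_j^{(\pi)}$, and then closes the gap globally by observing that the total fraction of good bit-channels summed over all $m$ users cannot exceed the sum-rate $I(W) = \sum_{j} R_j^{(\pi)}$, which forces equality for every $j$ simultaneously. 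You instead sandwich the MAC threshold $2^{-N^{\beta}}\!/mN$ between the two single-user thresholds $2^{-M^{\beta_2}}\!/M$ and $2^{-M^{\beta_1}}\!/M$ with $\beta_1 < \beta < \beta_2 < \shalf$ and $M = N/L$, obtaining the exact limit block by block. Your version is slightly more self-contained: the paper's saturation step tacitly uses that good bit-channels carry mutual information close to $1$, so that their normalized count is asymptotically bounded by $\frac{1}{N}\sum_{k} I(W_N^{(k)}) = I(W)$ --- a standard but unstated fact --- whereas your squeeze needs nothing beyond applying the single-user theorem twice per block. Both routes are valid; the paper's is shorter, yours avoids the implicit converse-type bound.
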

\begin{proof}
By \Cref{recursion-split-general}, 
$$
\cG^{(j)}_N(W, \beta) = \bigcup_{i \in S^{(j)}_L} \cG_{\frac{N}{L}}(W^{(i)}_{L}, \beta)
$$ 
There exists $\beta' \,{<}\, \shalf$ such that $N^{\beta} < (\frac{N}{L})^{\beta'}$ for large enough $N$. Therefore, channel polarization theorem for the single user case \cite{AT} imply that
\begin{align*}
\lim_{N \to \infty} \frac{\bigl|\cG^{(j)}_N(W, \beta)\bigr|}{N} &\geq \frac{1}{L} \sum_{i \in S^{(j)}_L} \lim_{N \to \infty} \frac{\bigl|\cG_{\frac{N}{L}}(W^{(i)}_{L}, \beta)\bigr|}{N/L} \\
& = \frac{1}{L} \sum_{i \in S^{(j)}_L} I_i^{(\pi)} = R_j^{(\pi)}
\end{align*}
Since the fraction of all good bit-channels can not exceed the sum-rate $I(W) = \sum_{j \in [m]} R_j^{(\pi)}$, both the above inequalities should be equality. This completes the proof of theorem.
\end{proof}

\subsection{Encoding, decoding and asymptotic performance}

The encoding of the proposed scheme in the previous subsection is similar to that of original polar codes for each of the users. In fact, regardless of the choice of polarization base code the encoder for any block length $N$ is fixed, where each user multiplies a vector of length $N$ by the polarization matrix $\Gn$. The only thing that depends on the polarization base code is the indices of information bits. For a fixed length of the polarization base code $L=2^l$ and decoding order $\pi$, the set of good bit-channels for all the $m$ users $\cG_N^{(j)}(W,\beta)$ are defined in \eq{good-def}. In the polar encoder for the $j$-th user, $u_i$ is an information bit for any $i \in \cG_N^{(j)}(W,\beta)$. Otherwise, $u_i$ is frozen to a fixed value. Therefore, the rate of polar code for the $j$-th user approaches $R^{(\pi)}_j$, as $N$ goes to infinity, by \Tref{thm2}. The frozen bits are chosen uniformly at random, revealed to the receiver a priori and fixed during the course of communication. If the underlying channels are symmetric, then the frozen bits can be set to zeros.

For joint decoding, the successive cancellation decoding defined in \cite{Arikan} is performed, with some modification. The order of decoding is determined by $\pi^{(N/L)}$, defined in the previous subsection. The low-complex implementation of successive cancellation decoder invented by Ar{\i}kan in \cite{Arikan} can be extended to our scheme for MAC. The recursive steps of computing likelihood ratios can be done in a similar way using $mL$ trellises of size $N \times (n-l+1)$, corresponding to the $mL$ bit-channels defined with respect to the polarization base code of length $L$. The input likelihood ratios (LR) for these trellises can be computed using a naive way with complexity $O(2^{mL})$ for each LR. Notice that this is the worst case required complexity in the sense that depending on the choice of particular base code one can invoke the recursive structure, imposed by $G^{\otimes l}$ for each of the users, to compute the LRs more efficiently. The total decoding complexity is then upper bounded by $O(mN(n-l+1+2^{mL}))$, where $m$ and $L$ are regarded as fixed parameters in the scheme. Therefore, the decoding complexity is asymptotically $O(N \log N)$, similar to the original polar decoding.

For the asymptotic performance of the proposed polar coding, the following theorem follows similar to \cite{Arikan} and \cite{AT}.
\begin{theorem}
\label{thm-main}
For any $\beta \,{<}\, \shalf$, any $m$-user MAC $W$, any $\epsilon > 0$ and any point $\bQ$ on $\cD(W)$, there exists a family of polar codes that approaches a point on the dominant face within distance $\epsilon$ from $\bQ$. Furthermore, the probability of frame error under successive cancellation decoding is less than $2^{-N^{\beta}}$, where $N$ is the block length of the code for each of the users.
\end{theorem}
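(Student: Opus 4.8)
The plan is to prove the two assertions separately, decoupling the geometry of the rate region from the polarization limit by choosing the base-code length $L$ and the block length $N$ in two successive stages. First I would fix $L = 2^l$ large enough that the covering radius of $\cD(W)$ falls below $\epsilon$, and freeze the corresponding decoding order $\pi \in \cP_L$; only afterwards would I let $N = 2^n \to \infty$ to drive the actual code rates toward the target point $\bR^{(\pi)}$ and the frame error probability to zero. The two claims of the theorem then follow from \Cref{cover2} together with \Tref{thm2} for the rate part, and from a successive-cancellation union bound for the error part.

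For the rate claim, given $\bQ \in \cD(W)$ and $\epsilon > 0$, I would pick $l$ so that $\frac{(m-1)\sqrt{m}}{L} \leq \epsilon$. By \Cref{cover2} there is a permutation $\pi \in \cP_L$ with $\|\bQ - \bR^{(\pi)}\| \leq \frac{(m-1)\sqrt{m}}{L} \leq \epsilon$; moreover, by the Remark following \Lref{sum-rate}, $\bR^{(\pi)}$ lies on $\cD(W)$, so it is a genuine dominant-face point within distance $\epsilon$ of $\bQ$. I would then construct, for each $n \geq l$, the polar code of block length $N$ built on the base code $(\pi,L)$, placing the information bits of user $j$ on $\cG_N^{(j)}(W,\beta)$ and freezing the rest. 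By \Tref{thm2}, the rate of user $j$ satisfies $\frac{|\cG_N^{(j)}(W,\beta)|}{N} \to R_j^{(\pi)}$ as $N \to \infty$, so the full rate tuple of this family approaches $\bR^{(\pi)}$, establishing that the family approaches a dominant-face point within distance $\epsilon$ of $\bQ$.

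For the error probability, I would invoke the standard successive-cancellation union bound, now applied to the joint decoder whose order is $\pi^{(N/L)}$. The key enabling fact is \Cref{recursion-split-general}: each bit-channel $W_N^{(j)}$ from \eq{Wi-def-general} coincides with an ordinary single-user Ar{\i}kan bit-channel of one of the base channels $W_L^{(k)}$, so the classical estimate, namely that the probability of the first decoding error occurring at bit $j$ given that all earlier bits are correct is at most $Z(W_N^{(j)})$, carries over verbatim. Summing over all information bits and using that the sets $\cG_N^{(j)}(W,\beta)$ are disjoint subsets of $[mN]$, the frame error probability is at most $\sum_{j\in[m]} \sum_{i \in \cG_N^{(j)}(W,\beta)} Z(W_N^{(i)})$. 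Each selected channel obeys $Z < 2^{-N^{\beta}}/(mN)$ by \eq{good-def}, and there are at most $mN$ of them in total, giving the bound $mN \cdot 2^{-N^{\beta}}/(mN) = 2^{-N^{\beta}}$.

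The part demanding the most care is the two-stage limiting argument and its interaction with the good-channel threshold. Because $L$ is frozen before $N$ grows, I must confirm that \Tref{thm2} really delivers the rate $R_j^{(\pi)}$ at the MAC threshold $2^{-N^{\beta}}/(mN)$, which is stated in $N$ rather than $N/L$; this is precisely where the proof of \Tref{thm2} selects $\beta' < \shalf$ with $N^{\beta} < (N/L)^{\beta'}$, so that single-user good channels remain good even after the stricter MAC normalization. The only genuinely MAC-specific step is verifying that the per-bit Ar{\i}kan error estimate is legitimate for the joint decoder; once \Cref{recursion-split-general} is in hand this reduces to the single-user case and presents no real obstacle.
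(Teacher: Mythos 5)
Your proposal is correct and takes essentially the same route as the paper's proof: fix $L$ via \Cref{cover2} so that the covering radius drops below $\epsilon$, freeze the corresponding $\pi$, invoke \Tref{thm2} as $N \to \infty$ for the rate claim, and bound the frame error by the sum of Bhattacharyya parameters of the selected good bit-channels, which is at most $mN \cdot 2^{-N^{\beta}}\!/(mN) = 2^{-N^{\beta}}$. The additional detail you supply on the union bound and on \Cref{recursion-split-general} merely elaborates steps the paper leaves implicit.
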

\begin{proof}
The choice of $L$ for the polarization base code depends on $\epsilon$. Fix $L = 2^l$ such that $\frac{(m-1)\sqrt{m}}{L} < \epsilon$. Then by \Cref{cover2} there exists $\pi$ such that the distance between $\bQ$ and $\rp$ is less than $\epsilon$. Fixing $L$ and $\pi$, for any block length $N = 2^n \geq L$, we construct the polar code for the $j$-th user with respect to the set of good bit-channels $\cG_N^{(j)}(W,\beta)$ defined in \eq{good-def}. Then by \Tref{thm2}, the $m$-tuple of rates approach $\rp$ as $N$ goes to infinity. The probability of frame error is bounded by the sum of the Bhattacharrya parameters of the selected bit-channels. Therefore, it is less than $2^{-N^{\beta}}$ for the joint successive cancellation decoding of all the $m$ messages together, by definition of $\cG_N^{(j)}(W,\beta)$. 
\end{proof}

\subsection{Comparison with prior works}
\label{sec:comparison}

Our approach for polarizing MAC is essentially different from those of \cite{STY,AT2}. The authors of \cite{STY,AT2} propose a framework for MAC polarization by extending the notion of channel splitting from the single user case to the two user case \cite{STY}, and then to the $m$-user case \cite{AT2}, while we propose a method to view the MAC polarization as a single user channel polarization by considering all the possible decoding orders. First, we compare the two schemes in terms of decoding complexity and then capacity achieving property.

Both of our scheme and those of \cite{STY,AT2} use a successive cancellation decoding, originally proposed by Ar{\i}kan in \cite{Arikan}. As discussed in the foregoing subsection, we have to combine the low-complex Ar{\i}kan's decoder with a basic decoder for the base code. As a result, we have an extra additive term $O(mN 2^{mL})$ in the complexity which is dominated by $O(mN (n-l+1))$, as $N$ goes to infinity, while $L$ and $m$ are assumed to be constant. On the other hand, in the scheme proposed in \cite{STY} and extended in \cite{AT2}, the likelihood of a vector of length $m$ needs to be tracked along the decoding trellis. Therefore, instead of a simple likelihood ratio, a vector of length $2^m$ for the probability of all $2^m$ possible cases has to be computed recursively. This increases the decoding complexity by a factor of $2^m$. In fact, the decoding complexity is still $O(N\log N)$, but if we look at the actual number of operations needed to complete the decoding, the decoding complexity of the scheme in \cite{STY,AT2} is $\frac{2^m}{m}$ times more than the decoding complexity of our scheme, asymptotically.  

Next, we compare the two schemes in terms of capacity-achieving property. In \Tref{thm-main}, we proved that all the points on the dominant face of the uniform rate region can be achieved with arbitrary small resolution. As pointed out before, the scheme proposed in \cite{STY,AT2} does not necessarily achieve the whole uniform rate region. It is only guaranteed that one point on the dominant face is achievable. Here, we provide an example for a two user MAC such that the scheme proposed in \cite{STY}, achieves only one point on the dominant face. The example can be extended to the case that the number of users is a power of two.

Let $W': \left\{0,1\right\} \rightarrow \sY'$ be an arbitrary single user binary-input DMC. Then the two user multiple access channel $W: \left\{0,1\right\} \rightarrow \sY$ is constructed as follows. Let $\sY = \sY' \times \sY'$. For any two binary inputs $u$ and $v$ corresponding to the first and second user, $u \oplus v$ and $v$ are transmitted through two independent copies of $W'$ with outputs $y_1$ and $y_2$. Then the output of $W$ is defined to be $y = (y_1,y_2) \in \sY$. The diagram of $W$ is depicted in Figure\,\ref{MAC-example}. 

\begin{figure}[h]
\centering
\includegraphics[width=0.7\linewidth]{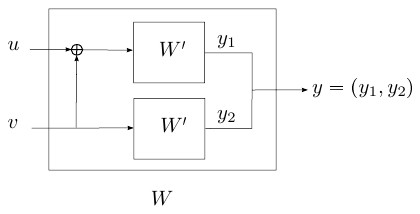}
\caption{A constructed example for two user MAC}
\label{MAC-example}
\end{figure} 

Let $I = I(W')$, $I^- = I(W^{'-})$ and $I^+ = I(W^{'+})$. Then the uniform rate region of $W$ is as shown in Figure\,\ref{MAC-example-region}. 

\begin{figure}[h]
\centering
\includegraphics[width=0.7\linewidth]{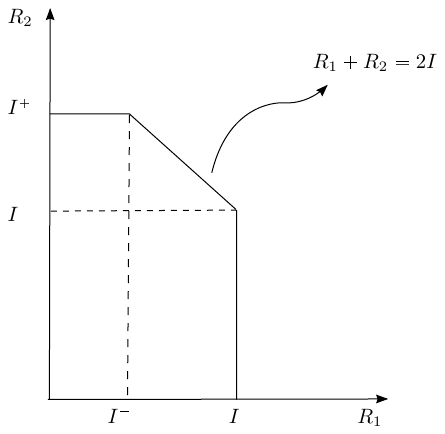}
\caption{The uniform rate region for the constructed example}
\label{MAC-example-region}
\end{figure} 

It is easy to observe that a polar transformation of length $N$ on two user MAC $W$, in the sense defined in \cite{STY}, is equivalent to a single user polar transformation of length $2N$ over $W'$. Let $U_1^N$ and $V_1^N$ be the input bits by the first and the second user, respectively. Let $W^{(i)}_N$ be the $i$-th two user bit-channel observed by $(U_i,V_i)$. Then
$$
I^{(2)}(W^{(i)}_N) = I(V_i;Y_1^i,U_1^i,V_1^{i-1}) = I(W'^{(2i)}_{2N})
$$
Also,
\begin{align*}
I^{(1)}(W^{(i)}_N) & = I(U_i;Y_1^i,U_1^{i-1},V_1^i) \\
& \geq I(U_i;Y_1^i,U_1^{i-1},V_1^{i-1}) = I(W'^{(2i-1)}_{2N})
\end{align*}
And,
$$
I(W^{(i)}_N) =  I(W'^{(2i)}_{2N}) + I(W'^{(2i-1)}_{2N})
$$
Notice that as $N$ goes to infinity, $W'^{(2i-1)}_{2N}$ and $W'^{(2i)}_{2N}$ are either both good or both not good, for almost all $i$'s. In fact, the fraction of $i$'s for which only one of them is good approaches zero. Therefore, the triple $\bigl(I^{(1)}(W^{(i)}_N), I^{(2)}(W^{(i)}_N), I(W^{(i)}_N)\bigr)$ approaches either $(0,0,0)$ or $(1,1,2)$ among the five possible cases discussed in \cite{STY}. The point $(0,0,0)$ corresponds to the case that both of the inputs needs to be frozen to fixed values. The point $(1,1,2)$ corresponds to the case that both the inputs carry information bits. Thus, the only achievable point on the dominant face by this scheme is the point $(I,I)$, and the achievable region is the square with vertices $(0,0),(0,I),(I,0),(I,I)$. On the other hand, as proved in this section, the whole region shown in Figure\,\ref{MAC-example-region} can be achieved by our proposed scheme with joint successive decoder.

\section{Discussions and conclusion}
\label{sec:six}

In this paper, we considered the problem of designing polar codes for transmission over general $m$-user multiple access channels. The key observation behind our work is to view the polar transformations for the $m$-users across the multiple access channel as a unified MAC polar transformation. We showed that the MAC polar transformation with a certain decoding order can be split into $mL$ single user polar transformations, where $L$ is the length of the base code. We also proved that the covering radius of the dominant face is upper bounded by $\frac{(m-1)\sqrt{m}}{L}$. Therefore, by letting $L$ to grow large, we are able to achieve the entire uniform rate region with the constructed MAC polar code.

The idea behind the joint successive cancellation decoding is that decoding the messages of different users successively is essentially following the same rule as in the successive cancellation decoder of polar codes. Therefore, one can invoke the joint successive cancellation decoder with a certain decoding order that is the result of shuffling the input bits of all users. We also derived the direct relation between the decoding order and the polarization base codes in order to guarantee channel polarization. We proved that by shuffling the order of the information bits of different users to be decoded, which will also enforce the choice of MAC polarization base code, the achievability of the entire uniform rate region is guaranteed. 

Since the individual codes for each of the users can be regarded as single user polar codes, all the existing methods for improving the performance of single user polar codes can be applied on top of our proposed MAC-polar codes. For instance, the individual polar codes can be made systematic as suggested by Ar{\i}kan \cite{Arikan4} in order to improve the bit error rate. Also, concatenation of polar codes with other block codes \cite{RS-polar} or list decoding of polar codes \cite{TV} can be used to boost the finite-length performance. Also, as discussed in \cite{CISS}, one can invoke the compound polar codes proposed in \cite{compound} in order to improve the finite length performance if time sharing is used.

\bibliographystyle{IEEEtran}
\bibliography{polar}

\end{document}